\newcommand{\suchthat}{\;\ifnum\currentgrouptype=16 \middle\fi|\;}
\newtheorem{assm}{Assumption}
\newtheorem{theorem}{Theorem}
\newtheorem{prop}{Proposition}
\newtheorem{lemma}{Lemma}
\newtheorem{cor}{Corollary}
\newtheorem{example}{Example}
\theoremstyle{remark}
\newtheorem{remark}{Remark}
\DeclareMathOperator*{\argmax}{\arg\!\max}
\newcommand{\E}{\mathbb{E}}
\newcommand{\eps}{\varepsilon}
\newcommand{\oD}{\overline{D}}
\newcommand{\rB}{\overline{B}}
\definecolor{darkblue}{rgb}{0.0,0.0,0.35}
\begin{document}

\title{\large{Identification of Random Coefficient Latent Utility Models}\thanks{Any remaining errors are our own.} }

\author{ \small{Roy Allen}  \\
    \small{Department of Economics} \\
    \small{University of Western Ontario} \\
    \small{rallen46@uwo.ca}
    \and 
    \small{John Rehbeck} \\
    \small{Department of Economics} \\
    \small{The Ohio State University} \\
    \small{rehbeck.7@osu.edu}
}
\date{\small{ \today }} 

\maketitle

\begin{abstract}
This paper provides nonparametric identification results for random coefficient distributions in perturbed utility models. We cover discrete and continuous choice models. We establish identification using variation in mean quantities, and the results apply when an analyst observes aggregate demands but not whether goods are chosen together. We require exclusion restrictions and independence between random slope coefficients and random intercepts. We do not require regressors to have large supports or parametric assumptions.
\end{abstract}

\newpage

\section{Introduction}

Latent utility models with linear random coefficients have been extensively used. They have a long history in discrete choice,\footnote{\cite{heckman2001micro} attributes the first use to \cite{domenich1975urban} in economics.} and have become increasingly popular due to computational advances (see e.g. \cite{train2009discrete}). They now form the core demand system of most applied work involving demand for differentiated products following \cite{berry1995automobile}. Progress has been made on identification of these models in discrete choice, but gaps remain, even in semiparametric settings. For example, nonparametric identification of the distribution of random coefficients in the random coefficients nested logit model has not been established without unbounded regressors.\footnote{See \cite{nevo2000practitioner}, p. 524-526. \cite{il2014identification} has established identification in the special case where intercept location coefficients are $0$.} More broadly, there is growing interest in models that allow complementarity, but even less is known about identification of random coefficients in these models.\footnote{Recent work includes \cite{gentzkow2007}, \cite{mcfadden2012theory}, \cite{fosgerau2019inverse}, \cite{allen2019identification}, \cite{ershov2018mergers}, \cite{monardo2019flexible}, \cite{iaria2019}, and \cite{wang2020}. This work is an outgrowth of the discrete choice additive random utility model \citep{mcfadden1981} and differs from classic continuous demand systems (e.g. \cite{deaton1980almost}) by focusing on characteristic variation rather than variation in a budget constraint.}

The main contribution of this paper establishes nonparametric identification for the moments of random coefficients in a general class of latent utility models. The framework applies to discrete and continuous choices. As a special case, we establish identification for a bundles model with limited consideration of either alternatives or characteristics (Example~\ref{ex:bundles}). Identification only depends on the average structural function \citep{blundell2003endogeneity}. Thus the results can be applied when one observes the average demands of individuals without observing whether goods are chosen together. Leveraging the main result, we can identify the distribution of random coefficients when it is characterized by its moments (e.g. normal distributions).  Specialized to discrete choice, the main contribution is new since it does not require any regressor to be unbounded.

Two key ingredients let us get traction for identification. First, we assume independence between random slopes and random intercepts. This is a standard assumption in the widely-used random coefficients logit model. We use this assumption to integrate out the random intercepts. This smooths out demand when conditioning on non-intercept components of the utility function (regressors and random coefficients). This also allows us to treat discrete and continuous choice models in a common framework. Second, we exploit theoretical restrictions since choices arise from optimization. Without using this structure, the model would resemble general random coefficients index models studied in \cite{fox2012random} and \cite{lewbel2017unobserved}. While they assume a function governing the mapping from indices to choices is \textit{known}, we do not.\footnote{In discrete choice, assuming this function is known translates to the distribution of random intercepts being known (e.g. logit). \cite{lewbel2017unobserved} show that one can drop the assumption that this mapping is known in some settings if one imposes an additional additive separability assumption.}

The fundamental shape restriction we exploit is that, after integrating out random intercepts, integrated mean choices are the derivative of a convex function. This follows from an application of the envelope theorem. Similar tools from convex analysis have also been used for identification of hedonic models \citep{ekeland2002identifying,ekeland2004identification,heckman2010nonparametric,chernozhukov2019single}, matching \citep{galichon2015cupid}, dynamic discrete choice \citep{chiong2016duality}, discrete choice panel models \citep{shi2018estimating}, and perturbed models with additively separable heterogeneity \citep{allen2019identification}, among others.\footnote{\cite{matzkin1994restrictions} reviews other identification results using shape restrictions motivated by economic theory. See also work on optimal transport, as in \cite{galichon2018optimal}.}

By exploiting the envelope theorem, we can treat several models in a common framework. There is little work on identification of optimizing models with linear random coefficients outside of discrete choice. Exceptions include \cite{dunker2018nonparametric} for discrete games and \cite{dunker2017nonparametric} and \cite{iaria2019} for a random coefficients version of the \cite{gentzkow2007} discrete bundles model. We differ by requiring identification of only the average structural function (``mean demands''), without needing to observe the frequency with which goods are chosen together.\footnote{\cite{wang2020} also works with the average structural function but does not identify the distribution of random coefficients.} Identification with linear random coefficients has also been established in settings without assuming an optimizing model. See for example the simultaneous equations analysis in  \cite{masten2017random} and references therein.

Identification of linear random coefficients has been extensively studied in discrete choice. Despite this, nonparametric identification has only been established either requiring a regressor with large support or assuming the distribution of random intercepts is either known or parametric. One reason we do not require a large support assumption is that we focus on identification of the distribution of random coefficients without identifying random intercepts. Papers that make use of large support regressors with heterogeneity that is not additively separable include \citep{ichimura1998maximum}, \cite{berry2009nonparametric}, \cite{briesch2010nonparametric}, \cite{gautier2013nonparametric}, \cite{fox2016nonparametric}, \cite{dunker2017nonparametric}, and \cite{fox2017note}.\footnote{Exceptions include \cite{kashaev2018identification} and \cite{matzkin2019constructive}, but neither paper studies nonparametric identification for the distribution of linear random coefficients.} Several of these papers additionally assume large support regressors \textit{also} have the same coefficient across goods. In contrast, \cite{fox2012random} and \cite{il2014identification} do not assume large support or  a homogeneous regressor, but assume the distribution of the random intercept is known (e.g. logit). \cite{chernozhukov2019nonseparable} discusses identification of ratios of certain moments of the distribution of random coefficients without requiring large support, but do not provide conditions under which the full distribution is identified.

The remainder of the paper proceeds as follows. Section~\ref{sec:setup} provides details on the class of latent utility models we study and examples of behavior that this covers. Section~\ref{sec:main} provides the main result, which identifies arbitrary order moments of random coefficients and shows that a single independence and scale assumption can be used to identify all other moments. Section~\ref{sec:welfare} discusses how to recover different welfare objects and perform counterfactuals. Finally, Section~\ref{sec:discussion} discusses relations to some existing papers, shows how the results can be taken to settings with non-linear random coefficients, and discusses some testable properties of the framework.  

\section{Setup}\label{sec:setup}

This paper studies the \textit{random coefficients perturbed utility model}, in which optimizing choices satisfy
\begin{equation} \label{eq:latent}
Y(X,\beta,\eps) \in \argmax_{y \in B} \sum_{k = 1}^K y_k (\beta'_k X_k) + D(y, \eps).
\end{equation}
\cite{mcfadden2012theory} and \cite{allen2019identification} have studied related frameworks without random coefficients. We interpret $Y(\cdot)$ as the quantity vector for $K$ different goods. The vector $X_k = (X_{k,1}, \ldots, X_{k,d_k})'$ denotes observable shifters of the desirability of good $k$, and $\beta_k = (\beta_{k,1}, \ldots, \beta_{k,d_k})'$ denotes random coefficients on these shifters, which may be good-specific. The index $\beta'_k X_k$ shifts the marginal utility of good $k$. We collect $X = (X'_1, \ldots, X'_K)'$ and $\beta = (\beta'_1, \ldots, \beta'_K)'$. The term $D(y,\eps)$ is a disturbance that depends on unobservables $\eps$ of unrestricted dimension. When $D(y,\eps)=\sum_{k=1}^K \eps_k y_k$, $\eps_k$ can be interpreted as a random intercept for the desirability of the $k$-th good. In general, we refer to $D(y,\eps)$ as the random intercept. The set $B \subseteq \mathbb{R}^K$ is a feasibility set. This is introduced purely for exposition, since $D(y,\eps)$ can be $-\infty$ which allows random feasibility sets.

The focus of this paper is on identification of moments of the distribution of $\beta$. Our results do not require specification of the budget $B$, the disturbance $D$, or the distribution of over $\eps$. For concreteness, we provide some examples.
\begin{example}[Discrete Choice] \label{ex:disc}
Consider a discrete choice models with latent utility for good $k$ of the form
\[
v_k = \beta'_k X_k + \eps_k.
\]
When $\beta$ is random, this is a linear random coefficients model as studied in \cite{hausman1978conditional}, \cite{boyd1980effect}, \cite{cardell1980measuring}, among many others. This fits into the setup of (\ref{eq:latent}) by setting $D(y,\eps) = \sum_{k = 1}^K y_k \eps_k$, $B = \{ y \in \mathbb{R}^K \mid \sum_{k = 1}^K y_k = 1, y_k \geq 0 \}$ the probability simplex,\footnote{This allows the agent to randomize when there are utility ties.} and letting $Y(X,\beta,\eps) \in \{0,1 \}^K$ be a vector of indicators denoting which good is chosen. In many applications, an ``outside good'' is set to have a utility of $0$. This can be mapped to our setup by replacing the budget with $B = \{ y \in \mathbb{R}^K \mid \sum_{k = 1}^K y_k \leq 1, y_k \geq 0 \}$; this allows $Y(X,\beta,\eps) = (0,\ldots,0) \in B$, which can be interpreted as the choice of the outside option.

We also cover what is sometimes called the perturbed representation of choice, which can model market shares or individuals who like variety. For example, \cite{anderson1988representative} show logit models are related to the maximization problem
\[
\max_{y \in \Delta} \sum_{k = 1}^K y_k (\beta'_k X_k) + \sum_{k = 1}^K y_k \log(y_k),
\]
with $\Delta$ the probability simplex. \cite{hofbauer2002global} show by replacing the additive entropy term with a general disturbance, the setup covers all discrete choice additive random models once random intercepts are integrated out. \cite{fudenberg2015stochastic} study a model in which the disturbance is additively separable. \cite{fosgerau2019inverse} and \cite{allen2019revealed} show how to model complementarity with the perturbed utility representation.\\
\end{example}

\begin{example}[Bundles with Limited Consideration] \label{ex:bundles}
\cite{gentzkow2007} presents a model of choice of bundles involving online and print news. The model involves multiple goods and individuals can choose more than one good at the same time. A random coefficients version of the model has been studied in \cite{dunker2017nonparametric}. Let $v_{j,k}$ denote the utility associated with quantity $j$ of the first good, and quantity $k$ of the second good. Specify utilities
\begin{align*}
    \begin{split}
        v_{0,0} & = 0 \\
        v_{1,0} & = \beta'_1 X_1 + \eps_{1,0} \\
        v_{0,1} & = \beta'_2 X_2 + \eps_{0,1} \\
        v_{1,1} & = v_{1,0} + v_{0,1} + \eps_{1,1}.
    \end{split}
\end{align*}
Here, $\eps_{1,1}$ denotes a utility boost or loss from purchasing both goods relative to the sum of their individual utility. It describes complementarity/substitutability between the goods. For each quantity vector $\vec{y} = (y_1, y_2)$, set the utility as
\[
\sum_{k = 1}^2 y_k (\beta'_k X_k) + \left( y_1 \eps_{1,0} + y_2 \eps_{0,1} + y_1 y_2 \eps_{1,1} \right),
\]
and let the budget be $B = \{0,1\}^2$. Then the optimizing quantity vector $Y(X,\beta,\eps) \in \{ 0 , 1 \}^2$ fits into the setup of (\ref{eq:latent}).

This can be modified to include latent budgets. One may interpret these as mental ``consideration sets'' \citep{eliaz2011consideration, masatlioglu2012revealed, manzini2014stochastic, aguiar2017random} or general latent feasibility sets \citep{manski1977structure,conlon2013demand,brady2016menu}. In addition, we can allow limited consideration of the characteristics of goods.\footnote{See \cite{GABAIX2019261} for a survey of ``behavioral inattention.''}

To model these types of limited attention, consider a version of the bundles model  given by
\[
Y(X,\beta,\eps) \in \argmax_{y \in \{0,1\}^2} \sum_{k = 1}^2 y_k (\beta'_k X_k) + D(y,\eps),
\]
where
\[
D(y,\eps) = \begin{cases}  y_1 \eps_{1,0} + y_2 \eps_{0,1} + y_1 y_2 \eps_{1,1} & \text{if } y \in B(\eps) \\
    -\infty & \text{otherwise}
    \end{cases}.
\]
Here, the set $B(\eps)$ is a latent feasibility set, which could arise because an individual may not consider all goods or the analyst cannot observe when goods are out of stock. Some components of $\beta$ can be zero with positive probability, reflecting that individuals may not notice or care about certain characteristics.

This setup can be generalized to allow more goods, with some goods continuous and some goods discrete. What is key for our analysis is that the index $\beta'_k X_k$ shifts (only) the marginal utility of good $k$.
\end{example}

\subsection{Average Structural Function and Endogeneity}

This paper establishes identification of moments of $\beta$ using the average structural function \citep{blundell2003endogeneity}
\[
\overline{Y}(x) = \int Y(x, \beta, \eps) d \tau(\beta,\eps)
\]
for some probability measure $\tau$ that does not depend on covariates $x$. We assume that the measure $\tau$ satisfies a key independence condition.
\begin{assm}[Slope-Intercept Independence] \label{a:factor}
The random variables $\beta$ and $\eps$ are independent under the measure $\tau$, and the average structural function is finite.
\end{assm}
While independence between $\beta$ and $\eps$ is restrictive, it is a standard assumption in applications of the random coefficients logit model in discrete choice. It has been exploited for identification in \citep{fox2012random} and \citep{chernozhukov2019nonseparable}.\footnote{However, independence is not imposed in some papers studying identification. For example, \cite{ichimura1998maximum} or \cite{gautier2013nonparametric} do not impose independence of the slope and intercept.}

With this assumption, we can write
\[
\overline{Y}(x) = \int \int Y(x, \beta, \eps) d\mu(\eps) d\nu (\beta)
\]
for some probability measures $\mu$ and $\nu$. Technically, full independence is not needed as long as we can factor the average structural function in this way.

For an example of an average structural function, suppose $(Y,X,\beta,\eps)$ are random variables that satisfy $Y = Y(X, \beta, \eps)$ almost surely. Moreover, assume $X$, $\beta$, and $\eps$ are all independent.  In addition to independence, suppose a continuous version of the conditional mean of $Y$ given $X$ exists. Then
\[
\E[ Y \mid X = x] = \overline{Y}(x) = \int \int Y(x, \beta, \eps) d\mu(\eps) d\nu (\beta)
\]
for $x$ in the support of $X$,\footnote{Recall that the support of $X$ is the smallest closed set $S$ such that $P(X \in S) = 1$.} where $\mu$ is the marginal distribution of $\eps$ and $\nu$ is the marginal distribution of $\beta$.

The results in this paper apply to general average structural functions $\overline{Y}(x)$, not only the conditional mean. Thus, while slope-intercept independence is important for our results, independence between $X$ and $(\beta,\eps)$ is not. Therefore, the results in this paper are relevant for settings with endogeneity.

The goal of this paper is not to provide a new method to identify the average structural function, but rather to use the function to identify other features of a utility maximizing model. There is a large literature on identifying structural functions. \cite{blundell2003endogeneity} describe how to use control functions to identify the average structural function $\overline{Y}(x)$. \cite{altonji2005cross} identify derivatives of the average structural function using certain conditional independence or symmetry conditions. \cite{berry1994estimating}, \cite{berry1995automobile}, \cite{newey2003instrumental}, \cite{berry2014identification}, and \cite{dunker2017nonparametric} among others use instrumental variables to identify an average structural function from aggregate data.\footnote{A key step to apply these methods is injectivity in a market-level observable to a vector of unobservable endogenous vectors, usually denoted $\xi$. See \cite{allen2019injectivity} or Lemma 3 in \cite{allen2019identification} for injectivity results that cover the present model when the utility index for good $k$ is  $\beta'_k x_k + \xi_k$. Related injectivity results have appeared in \cite{galichon2015cupid} and \cite{chiong2017counterfactual}.} 

An important feature of the analysis is that only the average structural function is required to be identified over an appropriate region. Thus, the full distribution of $Y(x,\cdot,\cdot)$ induced by the product measure $\mu \times \nu$ over $(\beta, \eps)$ is not necessary for identification. For common discrete choice models the average structural function and the full distribution of $Y(x, \cdot,\cdot)$ contain the same information, but this is not true in general. This is particularly important when combining this analysis with work allowing endogeneity between $X$ and $(\beta, \eps)$. In particular, there are well-understood methods to identify the average structural function in the presence of endogeneity as mentioned earlier. In contrast, less is known about identification of the entire distribution of $Y(x,\cdot,\cdot)$ in the presence of endogeneity.\footnote{\cite{imbens2009identification} identify average and quantile structural functions with multidimensional heterogeneity in the outcome equation. \cite{torgovitsky2015identification} and \cite{d2015identification} identify the entire structural function with one-dimensional unobservable heterogeneity in the outcome equation. A multidimensional counterpart has been studied in \cite{fguns2019}. These papers all identify features of structural functions in the presence of endogeneity.}

In addition, requiring only the average structural function implies that the analysis can be applied to settings outside of discrete choice \textit{without} observing whether goods are chosen together. Of course if the full distribution of $Y(x,\cdot,\cdot)$ is identified, then these results apply as well. We recall that this paper does not study identification of the distribution of $\eps$ in the original latent utility model (\ref{eq:latent}). However, it is possible to identify the distribution of $\eps$ in some cases. For example, \cite{dunker2017nonparametric} show how to identify the distribution of random intercepts in a full-consideration random coefficients bundles model, provided the analyst has aggregate data on the frequency with which goods are chosen together.  

\subsection{Technical Tools}
We make use of an aggregation result that first integrates out the distribution of $\eps$.
\begin{lemma}[\cite{allen2019identification}] \label{l:agg} Let $Y(\cdot)$ satisfy (\ref{eq:latent}). For any measure $\mu$ over $\eps$ such that $\int Y(x, \beta, \eps) d\mu (\eps)$ and $\int D( Y(x, \beta, \eps), \eps) d\mu (\eps)$ exist and are finite, it follows that
\[
\int Y(x, \beta, \eps) \mu (d \eps) \in \argmax_{y \in \overline{B}} \sum_{k = 1}^K y_k (\beta'_k x_k) + \overline{D}(y),
\]
for $\rB$ the convex hull of $B$, and $\oD(y) = \sup_{\tilde{Y} \in \mathcal{Y} : \int \tilde{Y}(\eps) d\mu (\eps) = y} \int D \left(\tilde{Y}(\eps), \eps \right) d\mu ( \eps)$,\footnote{The supremum is taken to be $-\infty$ when there is no $\tilde{Y} \in \mathcal{Y}$ such that $\int \tilde{Y}(\eps) d\mu (\eps) = y$.} where $\mathcal{Y}$ is the set of $\eps$-measurable functions that map to $B$.

In addition, the (integrated) indirect utility function
\[
V(\beta'_1 x_1, \ldots, \beta'_K x_K) = \max_{y \in \overline{B}} \sum_{k = 1}^K y_k (\beta_k' x_k) + \overline{D}(y),
\]
satisfies
\[
V(\beta'_1 x_1, \ldots, \beta'_K x_K) = \int \left( \max_{y \in B} \sum_{k = 1}^K y_k (\beta'_k x_k) + D(y, \eps) \right) d\mu ( \eps).
\]
\end{lemma}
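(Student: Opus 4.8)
The plan is to prove both displays at once with a sandwich argument organized around the pointwise indirect utility. Write $v(\eps) = \max_{y \in B} \sum_{k=1}^K y_k(\beta'_k x_k) + D(y,\eps)$ for the value of the inner problem at each realization of $\eps$, and $\overline{Y} = \int Y(x,\beta,\eps)\,d\mu(\eps)$ for the mean choice, which is finite by hypothesis. Since the integral of a $B$-valued measurable function against a probability measure lies in the (closed) convex hull of $B$, we have $\overline{Y} \in \rB$, so $\overline{Y}$ is feasible in the aggregated problem. The whole argument then reduces to establishing $V = \int v(\eps)\,d\mu(\eps)$, after which the $\argmax$ claim falls out for free.

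First I would show $\int v\,d\mu \le V$. By optimality of $Y(x,\beta,\eps)$ pointwise in $\eps$ and linearity of the index term in $y$, we have $\int v\,d\mu = \sum_k \overline{Y}_k(\beta'_k x_k) + \int D(Y(x,\beta,\eps),\eps)\,d\mu(\eps)$. The map $\eps \mapsto Y(x,\beta,\eps)$ is itself an admissible $\tilde{Y} \in \mathcal{Y}$ with mean $\overline{Y}$, so by the definition of $\oD$ as a supremum over mean-matching selections, $\int D(Y,\eps)\,d\mu \le \oD(\overline{Y})$. Feasibility of $\overline{Y}$ then yields $\int v\,d\mu \le \sum_k \overline{Y}_k(\beta'_k x_k) + \oD(\overline{Y}) \le V$.

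For the reverse inequality, fix any $y \in \rB$ and any $\tilde{Y} \in \mathcal{Y}$ with $\int \tilde{Y}\,d\mu = y$. Pointwise optimality of $v(\eps)$ over $B$ gives $\sum_k \tilde{Y}_k(\eps)(\beta'_k x_k) + D(\tilde{Y}(\eps),\eps) \le v(\eps)$; integrating and again using linearity, $\sum_k y_k(\beta'_k x_k) + \int D(\tilde{Y},\eps)\,d\mu \le \int v\,d\mu$. Since the index term does not depend on $\tilde{Y}$, taking the supremum over admissible $\tilde{Y}$ replaces the disturbance term by $\oD(y)$, and maximizing over $y \in \rB$ delivers $V \le \int v\,d\mu$. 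Combining the two bounds proves the second display. Moreover the chain in the previous paragraph is now forced to hold with equality, so $\sum_k \overline{Y}_k(\beta'_k x_k) + \oD(\overline{Y}) = V$; since $\overline{Y} \in \rB$, this says precisely that $\overline{Y}$ attains the maximum defining $V$, which is the first display.

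The main obstacle is not the inequalities themselves---they are clean once $\oD(y)$ is read as ``the best attainable averaged disturbance subject to the mean choice being $y$''---but the measure-theoretic bookkeeping around them: verifying that the relevant integrals exist and are finite so the interchanges (really just linearity of the integral) are legitimate, confirming that $\eps \mapsto Y(x,\beta,\eps)$ is a valid measurable selection belonging to $\mathcal{Y}$, and justifying $\overline{Y} \in \rB$ (immediate when $B$ is compact, as in the examples). The hypotheses that $\int Y\,d\mu$ and $\int D(Y,\eps)\,d\mu$ exist and are finite are exactly what keeps every term above well-defined and rules out an $\infty - \infty$ indeterminacy in the sandwich.
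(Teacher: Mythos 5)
Your proof is correct, and it is essentially the argument the paper relies on: the paper's own ``proof'' is just a citation to Theorem 1 of \cite{allen2019identification}, whose proof proceeds exactly by your sandwich---pointwise optimality of $Y(x,\beta,\eps)$, integration, and taking the supremum over mean-matching selections $\tilde{Y}\in\mathcal{Y}$ to produce $\oD$. The technical points you flag (integrability so the sum splits, measurability of $\eps \mapsto Y(x,\beta,\eps)$, and the barycenter fact $\overline{Y}\in\rB$, which holds in finite dimensions for any $B$ once the mean is finite, not only for compact $B$) are genuine but standard, and your hypotheses are exactly what the lemma assumes, so there is no gap.
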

Note that assuming $Y(\cdot)$ satisfies (\ref{eq:latent}) requires the argmax set to be nonempty. This is a behavioral restriction that imposes sufficient structure for the theorem to go through, and imposes minimal restrictions on $D$. In particular, $D$ can be $-\infty$ for certain combinations of $(y,\eps)$ and need not be continuous. This allows us to treat limited consideration models as in Example~\ref{ex:bundles}.

We leverage the aggregation result from Lemma~\ref{l:agg} to use calculus-based techniques for identification. To illustrate how aggregation can lead to smoothness, recall that $Y(x,\beta,\eps)$ in discrete choice is a vector of indicators denoting which good is chosen (assuming no ties). Derivatives with respect to $x$ either do not exist at certain points, or are zero and contain little information. 

We smooth choices by working with
\[
\overline{Y}(x, \beta) := \int Y(x, \beta, \eps) d \mu (\eps)
\]
with $\mu$ as in Assumption~\ref{a:factor}. In discrete choice, when $\eps$ is integrated out $\overline{Y}(x,\beta)$ can be interpreted as the vector of probabilities conditional on only the utility indices. However, this general framework allows us to use the same tools to address discrete and continuous choice. For example, choices could involve a single discrete choice, discrete bundle choice, a prospective matching, continuous quantities of several goods, or time use among other settings.

We places some additional high-level sufficient conditions relative to the conclusions of Lemma~\ref{l:agg}.
\begin{assm} \label{a:maint} Assume the following:
\begin{enumerate}[(i)] 
    \item \[
    \overline{Y}(x, \beta) = \argmax_{y \in \overline{B}} \sum_{k = 1}^K y_k (\beta'_k x_k) + \overline{D}(y).
    \]
    \item $\overline{B} \subseteq \mathbb{R}^K$ is a nonempty, closed, and convex set.
    \item $\overline{D} : \mathbb{R}^K \rightarrow \mathbb{R} \cup \{ - \infty \}$ is concave, upper semi-continuous, and finite at some $y \in \overline{B}$.
\end{enumerate}
\end{assm}
\cite{allen2019identification} provide lower-level conditions that, when combined with Lemma~\ref{eq:latent}, imply this assumption. Part (i) strengthens the conclusion of Lemma~\ref{l:agg} to obtain a unique maximizer. Concavity in part (iii) is milder than it first appears, and delivers no additional restrictions on $\overline{Y}(x,\beta)$ when the other assumptions are maintained. See the discussion in \cite{allen2019identification}.

To further present the foundation of the identification results, we present a version of the envelope theorem.
\begin{lemma} \label{l:env}
Let Assumption~\ref{a:maint} hold. It follows that
\begin{equation} \label{eq:env}
\overline{Y}_k(x, \beta) = \partial_k V(\beta'_1 x_1, \ldots, \beta'_K x_K).
\end{equation}
\end{lemma}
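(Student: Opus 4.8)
The plan is to view $V$ as a convex conjugate in the utility indices and read off its gradient from the unique maximizer via convex duality. Writing $u := (u_1,\ldots,u_K)$ with $u_k = \beta_k' x_k$, I would define $f : \mathbb{R}^K \to \mathbb{R} \cup \{+\infty\}$ by $f(y) = -\oD(y)$ for $y \in \rB$ and $f(y) = +\infty$ otherwise. By Assumption~\ref{a:maint}(ii)--(iii), $\rB$ is closed and convex and $\oD$ is concave, upper semi-continuous, and finite somewhere on $\rB$, so $f$ is proper, convex, and lower semi-continuous. With this notation,
\[
V(u) = \max_{y \in \rB} \Big( \sum_{k=1}^K y_k u_k + \oD(y) \Big) = \sup_{y \in \mathbb{R}^K} \big( \langle u, y \rangle - f(y) \big) = f^*(u),
\]
the Fenchel conjugate of $f$, and by Assumption~\ref{a:maint}(i) the point $\overline{Y}(x,\beta)$ is the unique maximizer attaining this supremum; in particular $V(u)$ is finite.

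The key step is the duality identity $\partial V(u) = \partial f^*(u) = \argmax_{y}\, (\langle u, y\rangle - f(y))$. The easy inclusion I would prove directly: if $y^\star := \overline{Y}(x,\beta)$ attains the max at $u$, then for every $u'$ one has $V(u') \geq \langle u', y^\star\rangle - f(y^\star) = V(u) + \langle y^\star, u'-u\rangle$, so $y^\star \in \partial V(u)$. For the reverse inclusion I would use that $f$ proper, convex, and lower semi-continuous gives $f^{**}=f$, whence $g \in \partial f^*(u)$ is equivalent, through the Fenchel--Young equality and the conjugate subgradient inversion rule, to $u \in \partial f(g)$, i.e. to $g$ attaining the supremum defining $f^*(u)$. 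Thus $\partial V(u)$ equals the argmax set, which by Assumption~\ref{a:maint}(i) is the singleton $\{\overline{Y}(x,\beta)\}$.

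Finally I would invoke the classical fact (e.g. Rockafellar, \emph{Convex Analysis}, Theorem~25.1) that a convex function whose subdifferential at a point is a singleton is differentiable there with gradient equal to that subgradient; the singleton subdifferential also forces $u$ into the interior of the effective domain of $V$, so the differentiability is genuine rather than a boundary artifact. Hence $V$ is differentiable at $(\beta_1' x_1,\ldots,\beta_K' x_K)$ with $\nabla V(u) = \overline{Y}(x,\beta)$, and reading this componentwise yields $\overline{Y}_k(x,\beta) = \partial_k V(\beta_1' x_1,\ldots,\beta_K' x_K)$, which is~(\ref{eq:env}). The main obstacle I anticipate is the reverse inclusion in the duality identity, i.e. ruling out subgradients of $V$ that do not correspond to maximizers: this is precisely where lower semi-continuity of $\oD$ and closedness of $\rB$ enter, via the biconjugation $f^{**}=f$. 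The supporting-hyperplane inclusion and the passage from a singleton subdifferential to differentiability are otherwise routine.
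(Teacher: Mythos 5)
Your proof is correct and is essentially the same argument the paper itself points to: the paper's proof consists of citing Allen and Rehbeck (2019, Lemma 1) and noting that the result "may also be directly proven from Rockafellar (1970), Theorems 23.5 and 25.1," which is precisely your route. Writing $V = f^*$ with $f = -\oD + \delta_{\rB}$, identifying $\partial V(u)$ with the argmax set via conjugate subgradient inversion (Theorem 23.5, using that $f$ is closed proper convex so $f^{**}=f$), and then passing from the singleton subdifferential to genuine differentiability (Theorem 25.1) fills in exactly the details the paper leaves to those references.
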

Here, $\overline{Y}_k(x,\beta)$ is the $k$-th component of $\overline{Y}(x,\beta)$ and $\partial_k V(\beta'_1 x_1, \ldots, \beta'_K x_K)$ is the derivative with respect to the $k$-th dimension of $V$ evaluated at the point $(\beta'_1 x_1, \ldots, \beta'_K x_K)'$. We use similar notation for the rest of the paper. Differentiability of $V$ is implied by the fact that $\overline{Y}$ is the unique maximizer. This is the primary implication of Assumption~\ref{a:maint} that we use for this paper. 

\cite{fox2012random} use a structure similar to (\ref{eq:env}), showing that when $V$ is known, it is possible to identify moments of the distribution of $\beta$. We differ because we do not require an analyst to specify $V$. Instead, we require certain moments to be nonzero as a relevance condition. Appendix~\ref{s:vknown} provides further details and a comparison with their approach. A related structure is considered in \cite{lewbel2017unobserved}, who identify the distribution of random coefficients when an analogue of $\partial_k V$ is known in advance or additively separable in arguments. We do not impose this structure.

We also leverage a symmetry property of mixed partial derivatives that results from the optimizing behavior in Assumption~\ref{a:maint}. For a vector of indices $\gamma = (\gamma_1, \ldots, \gamma_M) \in \{1, \ldots, K \}^M$ and a sufficiently differentiable function $f : \mathbb{R}^K \rightarrow \mathbb{R}$, let
\[
\partial_{\gamma} f := \partial_{\gamma_1} \cdots \partial_{\gamma_M} f.
\]
\begin{lemma} \label{l:sym}
Suppose $V$ is $M$-times continuously differentiable in a neighborhood of $\vec{u} \in \mathbb{R}^K$. Let $\gamma, \delta \in \{ 1, \ldots, K \}^{M}$ be vectors of indices in which each index occurs the same number of times in both $\gamma$ and $\delta$. It follows that
\[
\partial_{\gamma} V(\vec{u}) = \partial_{\delta} V(\vec{u}). 
\]
\end{lemma}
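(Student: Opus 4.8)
The plan is to reduce the claim to the classical symmetry of mixed partial derivatives (Schwarz's/Clairaut's theorem) via a permutation argument. The hypothesis that each index in $\{1, \ldots, K\}$ occurs the same number of times in $\gamma$ as in $\delta$ is precisely the statement that $\gamma$ and $\delta$ have the same multiset of entries, hence that $\delta$ is obtained from $\gamma$ by a permutation $\sigma$ of the positions $\{1, \ldots, M\}$, i.e. $\delta_i = \gamma_{\sigma(i)}$. Since every permutation is a composition of finitely many adjacent transpositions, it suffices to prove that swapping two adjacent differentiation indices leaves the derivative unchanged; the general case then follows by chaining finitely many such swaps.

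For the single-swap step, fix a position $j \in \{1, \ldots, M-1\}$ and let $\gamma'$ agree with $\gamma$ except that $\gamma'_j = \gamma_{j+1}$ and $\gamma'_{j+1} = \gamma_j$. Define the inner function $h := \partial_{\gamma_{j+2}} \cdots \partial_{\gamma_M} V$, obtained by applying $M - j - 1$ derivative operators to $V$. Because $V$ is $M$-times continuously differentiable in a neighborhood of $\vec{u}$, the function $h$ is continuously differentiable of order at least $j+1 \geq 2$ there, hence in particular $C^2$. Schwarz's theorem then gives $\partial_{\gamma_j} \partial_{\gamma_{j+1}} h = \partial_{\gamma_{j+1}} \partial_{\gamma_j} h$ throughout this neighborhood. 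Applying the remaining operators $\partial_{\gamma_1} \cdots \partial_{\gamma_{j-1}}$ to both sides yields $\partial_\gamma V = \partial_{\gamma'} V$ on the neighborhood, and in particular at $\vec{u}$.

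Composing the swaps corresponding to a decomposition of $\sigma$ into adjacent transpositions transforms $\gamma$ into $\delta$ while preserving the value of the derivative at each stage, which gives $\partial_\gamma V(\vec{u}) = \partial_\delta V(\vec{u})$. I do not anticipate a substantive obstacle; the only care required is the differentiability bookkeeping that ensures each intermediate sequence still admits enough derivatives to invoke Schwarz and to apply the outer operators. This is guaranteed uniformly by the $C^M$ hypothesis, since every swap acts on a function obtained by at most $M-2$ differentiations of a $C^M$ function, so the relevant inner function is always at least $C^2$ and the subsequent $j-1$ differentiations produce a continuous function, consistent with $\partial_\gamma V$ being an $M$-th order derivative of a $C^M$ function.
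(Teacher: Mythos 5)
Your argument is correct, but it takes a different route from the paper. The paper's proof is a one-liner: it observes that $V$ is convex (in context, $V$ is an integrated indirect utility, a maximum of affine functions of the indices) and obtains the symmetry from a theorem in Rockafellar's \emph{Convex Analysis} (Theorem 4.5) ``plus repeated differentiation'' --- that is, the second-order symmetry is imported from convex analysis and then iterated. You instead prove the lemma exactly as stated: the multiset condition means $\delta$ is a permutation of $\gamma$, every permutation factors into adjacent transpositions, and each single swap follows from the classical Schwarz/Clairaut theorem applied to the inner function $h = \partial_{\gamma_{j+2}} \cdots \partial_{\gamma_M} V$, whose regularity ($C^{j+1}$, hence $C^2$) you track correctly, as well as the $j-1$ outer differentiations. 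What your approach buys is self-containedness and fidelity to the lemma's hypotheses: the $C^M$ assumption alone suffices, convexity of $V$ plays no role, and the ``repeated differentiation'' that the paper leaves implicit is made into an explicit, gap-free permutation argument. What the paper's approach buys is brevity and consistency with its convex-analytic toolkit (the adjacent Lemmas are also proved by citing Rockafellar), at the cost of invoking a property (convexity) that is not part of the lemma's statement and is only available from the surrounding model structure.
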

This result states that the order in which we take partial derivatives does not matter. For example, when $M = 2$ we have the usual symmetry property of mixed partial derivatives with respect to dimensions $j,k \in \{1,\ldots,K\}$ that
\[
\partial_{j,k} V(\vec{u}) = \partial_{k,j} V(\vec{u}).
\]
The lemma follows by repeated application of the $M = 2$ case.

\section{Main Result}\label{sec:main}
With the foundations in place, we now turn to the task of identifying moments of random coefficients. We focus on conditions where certain $M$-th order moments of the distribution of $\beta$ are identified. In particular, if the assumptions hold for \textit{all} $M$, then all moments of the distribution of random coefficients are identified.

We assume regressors are continuous and satisfy an exclusion restriction. 
\begin{assm} \label{a:cont}
All covariates are continuous. In addition, each $x_k$ is a vector of regressors specific to the $k$-th good.
\end{assm}

We now provide some intuition for the main result (Theorem~\ref{thm:main}). We consider identifying second moments of $\beta$ ($M = 2$) when there are two goods ($K = 2$) and each good has a single covariate ($d_k = 1$). We focus on second moments since this example captures the power of the results in the simplest non-trivial setting. We write the partial derivative of a function, $f$, with respect to the covariates of the $j$-th good, $x_j$, as $\partial_{x_j}f$. Differentiating the envelope theorem (Lemma~\ref{l:env}) and evaluating at $x = 0$ we obtain
\[
\partial_{x_j} \overline{Y}_k(0, \beta) = \partial_{j,k} V(0) \beta_j.\footnote{Here we abuse notation and for the function $f$, we let $\partial_{s} f(0) = \left. \partial_{s} f(z)  \right|_{z=0}$.}
\]
This uses the fact that $x_j$ is continuous and excluded from the utility index of other goods. This can be repeated with other mixed partial derivatives. Importantly, by evaluating derivatives at the point $x = 0$, the terms $\partial_{j,k}V(0)$ do not depend on $\beta$. Thus, when integrating over the values of the random coefficients, the term involving $V$ passes outside of the integral. In particular, integrating over $\beta$ yields the following system of equations
\begin{equation} \label{eq:example}
\begin{split}
    \partial_{x_1} \partial_{x_1} \overline{Y}_2(0) & = \partial_{1,1,2} V(0) \int \beta^2_1 d\nu(\beta) \\
    \partial_{x_1} \partial_{x_2} \overline{Y}_2(0) & = \partial_{1,2,2} V(0) \int \beta_2 \beta_1 d\nu( \beta) \\
    \partial_{x_2} \partial_{x_1} \overline{Y}_1(0) & = \partial_{2,1,1} V(0) \int \beta_1 \beta_2 d\nu( \beta) \\
    \partial_{x_2} \partial_{x_2} \overline{Y}_1(0) & = \partial_{2,2,1} V(0) \int \beta^2_2 d\nu ( \beta)
\end{split}
\end{equation}
where we have implicitly assumed that differentiation and integration can be interchanged.

Assume that the derivatives of $\overline{Y}$ are identified. At first glance, this is a system of four equations with seven unknowns (clearly the $\beta_1 \beta_2$ and $\beta_2 \beta_1$ moments are equal). However, when $V$ is sufficiently differentiable, partial derivatives of $V$ do not depend on the order of differentiation (Lemma~\ref{l:sym}), which eliminates two unknowns. Using a scale assumption that $\int \beta^2_1 d\nu (\beta)$ is known \textit{a priori} will eliminate an unknown and gives a system with $4$ equations and $4$ unknowns. We show that this is enough to identify all second moments of $\beta$.

To constructively see how the moments are identified, note that using symmetry of derivatives, the first and third equations identify $\int \beta_1 \beta_2 d\nu ( \beta)$. Using this, we identify $\partial_{1,2,2} V(0)$ using the second equation. Again using symmetry of derivatives and combining this with the last equation identifies $\int \beta^2_2 d\nu (\beta)$. Once all moments are identified, the remaining third order derivatives of $V$ can be identified at $0$.\footnote{This part also requires the equations
\begin{align*}
    \partial_{x_1} \partial_{x_1} \overline{Y}_1(0) & = \partial_{1,1,1} V(0) \int \beta^2_1 d\nu( \beta) \\
    \partial_{x_2} \partial_{x_2} \overline{Y}_2(0) & = \partial_{2,2,2} V(0) \int \beta^2_2 d\nu (\beta)
\end{align*}
to identify $\partial_{1,1,1} V(0)$ and $\partial_{2,2,2} V(0)$.} 

We now provide formal conditions that justify the intuitive argument for any number of goods, covariates, and order of moment $M$.
\begin{assm} \label{a:scale}
For the natural number $M$, $\int \beta^M_{1,1} d\nu (\beta)$ is finite, known \textit{a priori}, and nonzero.
\end{assm}
Assumption~\ref{a:scale} holds if we set $\beta_{1,1} = 1$, for example, but is considerably more general. It allows heterogeneity in the sign of $\beta_{1,1}$, for example. In general, if one wants to identify all moments of $\beta$ using the main result, then for every $M$ Assumption~\ref{a:scale} must hold. This assumption holds when the distribution of $\beta_{1,1}$ is known \textit{a priori} and the distribution has nonzero moments of all orders. If Assumption~\ref{a:scale} is dropped, the results in this paper establish identification of the ratio of any nonzero $M$-th order moments. Thus, Assumption~\ref{a:scale} can be appropriately modified by instead holding fixed the value of some other nonzero $M$-th order moment of the form $\int \beta_{k_1, \ell_1} \cdots \beta_{k_M, \ell_M} d \nu (\beta)$. We show in Section~\ref{sec:alt} that if $\beta_{1,1}$ is independent of all other components of $\beta$, then identification is possible using a single scale assumption on the first moment.

Recall that with minor abuse of notation we set
\[
\overline{Y}(x) = \int \overline{Y}(x, \beta) d\nu (\beta).
\]
We require the following regularity conditions.

\begin{assm} \label{a:reg}
For the natural number $M$, the following conditions hold:
\begin{enumerate}[(i)]
    \item For each good $k$, one can interchange integration and differentiation for all $M$-th order partial derivatives at $x=0$ so that 
    \[
    \partial_{x_{k_1,\ell_1}} \cdots \partial_{x_{k_M,\ell_M}}     \overline{Y}_k(0) = \int \partial_{x_{k_1,\ell_1}} \cdots     \partial_{x_{k_M,\ell_M}} \overline{Y}_k(0, \beta) d \nu ( \beta)
    \]
    holds.
    \item Each $M$-th order moment
    \[
    \int \beta_{k_1,\ell_1} \cdots \beta_{k_M, \ell_M} d \nu (\beta)
    \]
    exists and is finite.
    \item $V$ is $(M+1)$-times continuously differentiable in a neighborhood of $0$.
    \item For each $\gamma \in \{1, \ldots, K \}^{M+1}$,
    \[
    \partial_{\gamma} V (0) \neq 0.
    \]
    \item $\overline{Y}(x)$ is known in a neighborhood of $x = 0$, or more generally it is known in a neighborhood of $x = 0$ with respect to the weakly positive orthant of $\mathbb{R}^{\sum_{k = 1}^K d_k}$.\footnote{More formally, the second condition can be written as follows: for some neighborhood $H$ of $x = 0$ in the usual topology on $\mathbb{R}^{\sum_{k = 1}^K d_k}$, $\overline{Y}(x)$ is known on $H \cap \mathbb{R}^{\sum_{k = 1}^K d_k}_{+}$, where $\mathbb{R}_{+} = \mathbb{R} \cap [0, \infty)$.}
\end{enumerate}
\end{assm}
These regularity conditions parallel assumptions in \cite{fox2012random}. To interpret part (i), note that $\nu$ can be a discrete probability measure over $\beta$ with finite support. For discrete measures, (i) holds whenever $\overline{Y}_k(x,\beta)$ is $M$-times differentiable in $x$ for every $\beta$ in its support. Part (ii) formalizes that the moments we wish to identify exist and are finite.

Parts (iii) and (iv) can be linked to derivatives of the function $\overline{Y}(x)$ via the envelope theorem (Lemma~\ref{l:env}). Indeed, differentiating the envelope theorem for the $k$-th good, evaluating the derivative of $\bar{Y}_k$ with respect to $x_{j,\ell}$ at $x = 0$, and taking expectations yields
\begin{align} \label{eq:envassm}
\frac{\partial {\overline{Y}_k(0)}}{\partial x_{j, \ell}} = \partial_{j,k} V(0) \int \beta_{j,\ell} d\nu( \beta).
\end{align} 
Thus, when $V$ is $(M+1)$-times continuously differentiable, it follows that $\overline{Y}$ is $M$-times continuously differentiable. Moreover, if one sees empirically that $\frac{\partial {\overline{Y}_k(0)}}{\partial x_{j,\ell}} \neq 0$, then it follows that that $\partial_{j,k} V(0) \neq 0$ (whenever this derivative exists).

\cite{fox2012random} show that condition (iv) holds for random coefficients logit, for ``most'' values of nonrandom intercepts. Specifically, the set of intercepts that violate (iv) for some $\gamma$ has Lebesgue measure $0$. In general, whether (iv) holds depends on features of the distribution of $\eps$ and choice of $D$, which are example specific. For example, part (iv) rules out pure characteristic discrete choice models as in \cite{berry2007pure} and \cite{dunker2017nonparametric}. These models do not include a random intercept, and so the value function for the pure characteristics model, $V^{PC}$, can be written as
\[
V^{PC}(\beta'_1 x_1, \ldots, \beta'_K x_K) = \sup_{y \in \overline{B}} \sum_{k = 1}^K y_k (\beta_k'x_k)
\]
without the additive disturbance $\overline{D}$, where $\overline{B}$ is the probability simplex. $V^{PC}$ does not have a non-zero derivative at $x=0$ for any $M \ge 1$ with this constraint set. This choice of $V^{PC}$ also does not always induce a unique maximizer.

More generally, condition (iv) requires that goods in the demand system are related. For example, if the original $K$-good demand system can be written as $K$ separate $1$-good demand systems, then derivatives of the form $V_{j,k}(0)$ will be zero for $j \neq k$. This is because under this separability assumption, the utility index of good $j$ does not alter the demand for the $k$-th good. In general, (iv) cannot be relaxed for the main result to hold without additional assumptions. For example, if we impose that $\beta_j = \beta_k$ (a.s.) for all $j,k \in \{1,\ldots,K\}$, then one can identify ratios of moments under the weaker assumption that $\partial^{M+1}_j V(0) \neq 0$ for some $j$. See Appendix~\ref{supp:cherno}.

Condition (v) states that $\overline{Y}(x)$ is identified over a small region near $x = 0$. The constructive identification results in \cite{fox2012random} and \cite{chernozhukov2019nonseparable} have also made use of variation around zero. In contrast, most of the literature instead requires identification of $\overline{Y}$ either for all $x$ or for a set over which $x$ is unbounded along some dimensions.

To interpret condition (v), suppose that $X$, $\beta$, and $\eps$ are all independent, and we identify $\overline{Y}$ from a continuous version of the conditional mean of $Y$ given $X$. For this case, condition (v) is implied when the support of $X$ contains an open ball around $x = 0$. The second more general part of (v) highlights that the results also apply when the average structural function is identified over a weakly positive region. Thus, our results do not rule out prices. We can handle this case because we only need to identify certain derivatives of $\overline{Y}$ at $0$. These derivatives of $\overline{Y}$ at $0$ are identified in this case by calculating derivatives from ``one-sided'' limits involving non-negative numbers. 

The final assumption used for identification is that a sufficiently rich set of $M$-th order moments of $\beta$ are nonzero.
\begin{assm} \label{a:exist}
For the natural number $M$ and each tuple of good indices $(k_1, \ldots, k_M) \in \{1, \ldots, K \}^M$, there is a corresponding tuple of characteristic indices $(\ell_1, \ldots, \ell_M) \in \prod_{m=1}^M \{1,\ldots,d_{k_m} \}$ such that the $M$-th order moment
\[
\int \beta_{k_1,\ell_1} \cdots \beta_{k_M, \ell_M}\nu (d \beta)
\]
exists and is nonzero.
\end{assm}
This is a relevance condition. It is not necessary to know which indices $(\ell_1, \ldots, \ell_M)$ satisfy this condition in advance.\footnote{See the discussion after Lemma~\ref{l:a2} in Appendix~\ref{app:thm1}.} A sufficient condition for this is that for every $k$-th good there is a regressor $\ell_k \in \{1,\ldots,d_k\}$ such that either $\beta_{k,\ell_k} \geq 0$ almost surely or $\beta_{k,\ell_k} \leq 0$ almost surely, with positive probability that the inequality is strict. A stronger condition that implies this is that $\beta_{k,1} = 1$ (a.s.) for every $k$-th good by setting $\ell_1 = \cdots = \ell_M = 1$. This is a common assumption in the literature \citep{berry2009nonparametric,briesch2010nonparametric,dunker2017nonparametric}. However, \cite{ichimura1998maximum} and \cite{gautier2013nonparametric} establish identification of random coefficients models for binary discrete choice using a more general halfspace condition.  

With these assumptions, we can now state the main result of the paper.
\begin{theorem} \label{thm:main}
Let Assumptions~\ref{a:factor}-\ref{a:exist} hold with the same natural number $M$. Each $M$-th order moment of the form
\[
\int \beta_{k_1,\ell_1} \cdots \beta_{k_M, \ell_M} d\nu (\beta)
\]
is identified. In addition, for each $\gamma \in \{1, \ldots, K \}^{M+1}$,
\[
\partial_{\gamma} V (0)
\]
is identified.
\end{theorem}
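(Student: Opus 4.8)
The plan is to convert the identification problem into a system of bilinear equations that couple the (known) high-order derivatives of $\overline{Y}$ at $x=0$ to the (unknown) pairs consisting of one $(M+1)$-th order derivative of $V$ at $0$ and one $M$-th order moment of $\nu$, and then to solve this system by propagating identification through a chain of equations held together by the symmetry of mixed partials (Lemma~\ref{l:sym}).

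First I would differentiate the envelope relation $\overline{Y}_k(x,\beta)=\partial_k V(\beta_1'x_1,\dots,\beta_K'x_K)$ from Lemma~\ref{l:env}. Because each $x_{j,\ell}$ enters only the index $\beta_j'x_j$ (Assumption~\ref{a:cont}), the chain rule gives $\partial_{x_{j,\ell}}\overline{Y}_k(x,\beta)=\partial_{j,k}V(\cdots)\,\beta_{j,\ell}$; iterating $M$ times and evaluating at $x=0$ makes the argument of $V$ equal to $0$, so that
\[
\partial_{x_{k_1,\ell_1}}\!\cdots\partial_{x_{k_M,\ell_M}}\overline{Y}_k(0,\beta)=\partial_{k_1,\dots,k_M,k}V(0)\,\beta_{k_1,\ell_1}\cdots\beta_{k_M,\ell_M},
\]
where the derivative of $V$ no longer depends on $\beta$. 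Integrating against $\nu$ and interchanging integration and differentiation (Assumption~\ref{a:reg}(i),(iii)) yields the master equation $A(g,c,k)=\partial_{\gamma}V(0)\cdot m(g,c)$, in which $g=(k_1,\dots,k_M)$, $c=(\ell_1,\dots,\ell_M)$, $\gamma$ is the multiset $\{k_1,\dots,k_M,k\}$, the left side $A(g,c,k)$ is identified from Assumption~\ref{a:reg}(v) (computing one-sided derivatives at $0$, which coincide with two-sided ones since $\overline{Y}$ is $C^M$), and $m(g,c)=\int\beta_{k_1,\ell_1}\cdots\beta_{k_M,\ell_M}\,d\nu$. The crucial structural fact, from Lemma~\ref{l:sym}, is that $\partial_{\gamma}V(0)$ depends on $g$ and $k$ only through the multiset $\gamma$.

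Second I would record two directional implications of the master equation. If some moment $m(g,c)$ is known and nonzero, then dividing identifies $\partial_{\gamma}V(0)$ for every $\gamma=g\cup\{k\}$, $k\in\{1,\dots,K\}$. Conversely, if $\partial_{\gamma}V(0)$ is known (and it is nonzero by Assumption~\ref{a:reg}(iv)), then for $g=\gamma\setminus\{k\}$ dividing identifies $m(g,c)$ for every characteristic assignment $c$. I would then run an induction over good-multisets $g$ of size $M$, where two such multisets are adjacent when they share a common size-$(M+1)$ supermultiset (equivalently, differ by a single swap). The base case is the all-ones multiset, whose moment $\int\beta_{1,1}^M\,d\nu$ is known and nonzero by Assumption~\ref{a:scale}; Assumption~\ref{a:exist} supplies, at every reached multiset, some characteristic assignment with a nonzero moment to serve as the bridge to neighbours. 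Since this adjacency graph on size-$M$ multisets of $\{1,\dots,K\}$ is connected, the induction reaches every $g$, thereby identifying every moment $m(g,c)$ and every $\partial_{\gamma}V(0)$; the displayed ordered derivatives $\partial_{\gamma}V(0)$, $\gamma\in\{1,\dots,K\}^{M+1}$, then follow from symmetry.

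I expect the main obstacle to be the bookkeeping of this chaining argument: one must verify that the three relevance conditions (Assumptions~\ref{a:scale}, \ref{a:reg}(iv), and \ref{a:exist}) interlock so that no step ever divides by zero, and that the sequence of bridges actually reaches all good-multisets rather than stalling. Establishing connectivity of the multiset adjacency graph, and checking that at each reached $g$ the identified nonzero moment $m(g,c^{*})$ lets one cross every edge incident to $g$, is the combinatorial heart of the proof; the differentiation and interchange steps are routine once Assumption~\ref{a:reg} is in force.
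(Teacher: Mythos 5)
Your proposal is correct and takes essentially the same route as the paper's own proof: your ``master equation'' is the paper's Lemma~\ref{l:a2}, the multiset-invariance of $\partial_\gamma V(0)$ is Lemma~\ref{l:a3} (via Lemma~\ref{l:sym}), and your induction over good-multisets connected by single swaps, anchored at the all-ones multiset via Assumption~\ref{a:scale} and bridged by the nonzero moments of Assumption~\ref{a:exist}, is exactly the chain $\delta^0,\delta^1,\ldots$ constructed in the paper's Lemma~\ref{l:a5}. The only cosmetic difference is that you propagate identification by explicitly recovering $\partial_\gamma V(0)$ at each step, whereas the paper cancels these derivatives by taking ratios of derivatives of $\overline{Y}$ (its equation~(\ref{eq:ratio})); the two computations are equivalent.
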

This result establishes nonparametric identification of certain moments of $\beta$. It can be directly used to establish semiparametric identification of the distribution of $\beta$ for certain parametric families without specifying other objects (e.g. $V$). For example, if $\beta$ is normally distributed then Theorem~\ref{thm:main} identifies the distribution when the assumptions hold for $M \in \{1, 2\}$ because normal distributions are characterized by means and covariances. Recall that while we identify non-centered moments, we can use this information to identify centered moments. More generally, for any distribution of $\beta$ that is defined by its moments up to order $M$ this result estabilishes identification of the distribution.

\begin{cor} \label{cor:id}
Suppose Assumptions~\ref{a:factor}-\ref{a:exist} hold for each $M\le \overline{M}$ (which could be $\infty$) and the distribution of $\beta$ is determined by its first $\overline{M}$ moments. It follows that the distribution of $\beta$ is identified.
\end{cor}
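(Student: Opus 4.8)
The plan is to bootstrap from Theorem~\ref{thm:main}, which is a statement about a single order $M$, up to the full law by running it across all orders and then invoking the determinacy hypothesis. First I would fix an arbitrary $M \le \overline{M}$. By hypothesis Assumptions~\ref{a:factor}--\ref{a:exist} hold for this $M$, so Theorem~\ref{thm:main} applies verbatim and every $M$-th order moment of the form $\int \beta_{k_1,\ell_1}\cdots\beta_{k_M,\ell_M}\,d\nu(\beta)$ is identified. Letting $M$ range over $\{1,\ldots,\overline{M}\}$ (together with the trivial zeroth moment, which equals $1$) then yields identification of every mixed monomial moment of $\beta$ of total degree at most $\overline{M}$.

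The next step is to confirm that the moments delivered by this order-by-order application are exactly the collection to which the determinacy hypothesis refers. The components of the random vector $\beta$ are indexed by a good $k$ and a characteristic $\ell$, and a degree-$M$ monomial in these components is precisely a product $\beta_{k_1,\ell_1}\cdots\beta_{k_M,\ell_M}$ for some tuple of index pairs. Hence ranging over all such tuples, for each $M \le \overline{M}$, recovers the full list of ``first $\overline{M}$ moments'' of $\beta$: none is missing and none is superfluous, so the matching is immediate. With this in hand, I would pass from moments to the law by a standard identification argument: if $\nu$ and $\tilde\nu$ are any two candidate distributions of random coefficients consistent with the observed average structural function and satisfying the maintained assumptions, the two previous steps force them to share all moments of total degree at most $\overline{M}$; since by hypothesis the distribution of $\beta$ is determined by its first $\overline{M}$ moments, any two probability measures agreeing on these moments coincide, so $\nu = \tilde\nu$ and $\nu$ is identified.

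The substantive content is entirely absorbed by Theorem~\ref{thm:main}, so I do not expect a genuine obstacle here. The only points requiring care are bookkeeping—verifying that the union over $M \le \overline{M}$ of the moments identified by the theorem exhausts precisely the moment sequence invoked by the determinacy condition—and the conceptual reliance on uniqueness in the (multivariate) moment problem. The latter is not something I would need to establish, since determinacy is assumed as a hypothesis of the corollary; I would simply note that it is the standard device for converting identification of a moment sequence into identification of the underlying distribution, applicable for instance to the normal and other moment-determinate families.
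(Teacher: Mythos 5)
Your proposal is correct and follows exactly the route the paper intends: the corollary is an immediate consequence of applying Theorem~\ref{thm:main} order by order for each $M \le \overline{M}$ to identify all mixed moments up to order $\overline{M}$, and then invoking the assumed moment-determinacy to conclude that any two candidate distributions matching these moments must coincide. The paper treats this as immediate and gives no separate proof, so your bookkeeping of the matching between identified monomial moments and the determinacy hypothesis is, if anything, more explicit than the paper's own treatment.
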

\cite{fox2012random} and \cite{il2014identification} describe a sufficient condition for a distribution to be determined by its moments. Distributions with compact finite support are determined by their moments. Lognormal distributions are an example of distributions that are not determined by integer moments \citep{heyde1963property}. That is, there are other nonparametric distributions that can match the same moments. However, the parameters may still be identified within the lognormal class.

\begin{remark}[Constructive Identification]
The proof of Theorem~\ref{thm:main} is constructive. While a detailed approach to estimation is beyond the scope of this paper, the constructive results can be used to show how consistent estimation of the ratios of certain derivatives allows one to consistently estimate moments of the distribution of random coefficients. See  Appendix~\ref{app:estimation} for a brief outline. Estimation error of the $M$-th order moments can be bounded when the analyst has a suitable estimator of the $M$-th order derivatives of the average structural function. For example, \cite{chen2018optimal} provide an approach to estimate derivatives of average structural functions in the presence of endogeneity.
\end{remark}

\subsection{Independence of $\beta_{1,1}$} \label{sec:alt}
Identifying the distribution of $\beta$ using Corollary~\ref{cor:id} requires Assumption~\ref{a:scale}, which specifies all moments of the form $\int \beta^M_{1,1} d\nu (\beta)$. When the distribution of $\beta$ is identified from its moments, one must specify the marginal distribution of $\beta_{1,1}$ in advance to apply Corollary~\ref{cor:id}. While the common assumption $\beta_{1,1} = 1$ (a.s.) implies Assumption~\ref{a:scale}, one may not want to impose either this assumption or the weaker assumption that $\beta_{1,1}$ has a known distribution. This section describes an alternative assumption that ensures identification of moments of $\beta$. In particular, we assume $\beta_{1,1}$ is independent of other components of $\beta$. 

With this independence assumption, we show that a single scale assumption on the first moment of $\beta_{1,1}$ allows us to identify a rich collection of moments. This contrasts with Theorem~\ref{thm:main}, which uses an assumption on the $M$-th order moment of $\beta_{1,1}$ to identify \emph{only} $M$-th order moments of $\beta$.
\begin{assm} \label{a:oneind}
For the measure $\nu$ as defined in Assumption~\ref{a:factor}, $\beta_{1,1}$ is independent of all other components of $\beta$. In addition, $| \int \beta_{1,1} d \nu (\beta) |$ is finite, known \textit{a priori}, and nonzero.
\end{assm}
Alternatively, one could set the absolute value of some other order moment of $\beta_{1,1}$, but we focus on the first moment since it facilitates interpretation. Independence between $\beta_{1,1}$ and other components is considerably weaker than assuming $\beta_{1,1} = 1$ almost surely. For example, this allows $\beta_{1,1}$ to be sometimes negative and sometimes positive. Thus, different individuals can be repelled or attracted to higher values of $x_{1,1}$.

Replacing Assumption~\ref{a:scale} with Assumption~\ref{a:oneind}, we obtain the following counterpart of Theorem~\ref{prop:alt}.
\begin{prop} \label{prop:alt}
Let $K\ge 2$ and Assumptions~\ref{a:factor}-\ref{a:cont} and~\ref{a:reg}-\ref{a:oneind} hold for all for each $M\le \overline{M}$ (which could be $\infty$). It follows that the $M$-th order moment of the form
\[
\int \beta_{k_1,\ell_1} \cdots \beta_{k_M, \ell_M}\nu (d \beta)
\]
is identified. In addition, for each $\gamma \in \{1, \ldots, K \}^{M+1}$,
\[
\partial_{\gamma} V (0)
\]
is identified.

If $K = 1$, the same conclusions hold given the additional assumption that for each each $M \leq \overline{M}$, there exists an order $M-1$ moment such that
\[
\int \beta_{1,\ell_{1}} \cdots \beta_{1, \ell_{M-1}} d\nu(\beta) \neq 0,
\]
where $\ell_{m} \neq 1$ for every $m \in \{1, \ldots, M - 1\}$.
\end{prop}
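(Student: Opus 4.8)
The plan is to run an induction on the moment order $M$ that, at each step, reduces the claim to the constructive chaining already carried out in the proof of Theorem~\ref{thm:main}. Recall the engine behind that theorem: differentiating the envelope relation (Lemma~\ref{l:env}) and evaluating at $x=0$ yields, for every good-tuple and characteristic-tuple,
\[
\partial_{x_{k_1,\ell_1}}\cdots\partial_{x_{k_M,\ell_M}}\overline{Y}_k(0) = \partial_{k_1,\ldots,k_M,k}V(0)\int \beta_{k_1,\ell_1}\cdots\beta_{k_M,\ell_M}\,d\nu(\beta),
\]
where the left side is identified by Assumption~\ref{a:reg}(i),(v). By symmetry of mixed partials (Lemma~\ref{l:sym}) the factor $\partial_{k_1,\ldots,k_M,k}V(0)$ depends only on the multiset of good indices, by Assumption~\ref{a:reg}(iv) it is nonzero, and by Assumption~\ref{a:exist} each such multiset carries at least one nonzero moment. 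These three facts let a single identified nonzero $M$-th order moment propagate to all $M$-th order moments and all $(M+1)$-th order derivatives of $V$. The only role of Assumption~\ref{a:scale} in Theorem~\ref{thm:main} is to hand the chaining its initial moment $\int\beta_{1,1}^M\,d\nu$; I would therefore keep the chaining verbatim and replace only the way the initial (seed) moment is produced.

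\emph{Base case ($M=1$).} Here Assumption~\ref{a:oneind} supplies only $|\int\beta_{1,1}\,d\nu|$, so I must first recover the sign. Since $V$ is a maximum of functions affine in $(\beta_1'x_1,\ldots,\beta_K'x_K)$ it is convex, whence $\partial_{1,1}V(0)\ge 0$; combined with Assumption~\ref{a:reg}(iv) this gives $\partial_{1,1}V(0)>0$. The $k=j=1$, $\ell=1$ instance of \eqref{eq:envassm} then forces $\sign\!\big(\int\beta_{1,1}\,d\nu\big)$ to equal the sign of the identified quantity $\partial\overline{Y}_1(0)/\partial x_{1,1}$, which pins $\int\beta_{1,1}\,d\nu$ exactly. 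With this one first moment in hand, the Theorem~\ref{thm:main} chaining—now seeded by $\int\beta_{1,1}\,d\nu$ rather than assumed—delivers all first moments and all second-order derivatives of $V$.

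\emph{Inductive step.} Suppose all moments of order $\le M-1$ and all derivatives of $V$ of order $\le M$ are identified. I construct an identified, nonzero $M$-th order seed using independence: write the candidate seed as $\int \beta_{1,1}\,R\,d\nu$, where $R$ is a product of $M-1$ components of $\beta$ \emph{other than} $\beta_{1,1}$. By Assumption~\ref{a:oneind}, $\beta_{1,1}$ is independent of these components, so the seed factors as $\big(\int\beta_{1,1}\,d\nu\big)\big(\int R\,d\nu\big)$; the first factor is known (scale, with sign fixed in the base case) and the second is an order-$(M-1)$ moment identified by the inductive hypothesis, so the seed is identified. To make it nonzero I need $\int R\,d\nu\neq 0$: for $K\ge 2$ I take $R=\beta_{2,\ell_1}\cdots\beta_{2,\ell_{M-1}}$ with the indices from Assumption~\ref{a:exist} (order $M-1$, good-tuple $(2,\ldots,2)$), and for $K=1$ the extra hypothesis furnishes exactly such an $R=\beta_{1,\ell_1}\cdots\beta_{1,\ell_{M-1}}$ with every $\ell_m\neq 1$, so that $R$ omits $\beta_{1,1}$ and the factorization applies. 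Feeding this seed to the chaining—which propagates from any single identified nonzero $M$-th order moment, independently of which one—identifies all $M$-th order moments, in particular the pure powers $\int\beta_{1,1}^M\,d\nu$ that were never assumed, and all $(M+1)$-th order derivatives of $V$, completing the step.

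The main obstacle is the inductive step's seed construction rather than the chaining itself: one must verify simultaneously that the seed is identified (which needs the independence factorization, hence the requirement that $R$ omit $\beta_{1,1}$) and that it is nonzero (which needs $\int\beta_{1,1}\,d\nu\neq 0$ together with a nonzero lower-order moment in the complementary block). This is precisely where the $K=1$ case is delicate: with no second good available, the only way to build an $R$ that both omits $\beta_{1,1}$ and is nonzero is to use a nonzero lower-order moment in the remaining characteristics of the single good, which is the content of the extra hypothesis. A secondary point to check carefully is the base-case sign argument, since an unresolved sign in $\int\beta_{1,1}\,d\nu$ would cascade through the induction and leave every moment identified only up to a common sign.
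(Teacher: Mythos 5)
Your proof is correct and follows essentially the same route as the paper's: recover the sign of $\int\beta_{1,1}\,d\nu(\beta)$ from convexity of $V$ via $\partial_{1,1}V(0)>0$, then induct on the moment order, using independence to factor a seed moment $\int\beta_{1,1}\,d\nu(\beta)\cdot\int R\,d\nu(\beta)$ with $R$ omitting $\beta_{1,1}$ (good indices in $\{2,\ldots,K\}$ when $K\ge 2$, the extra hypothesis when $K=1$), and feeding it to the ratio-chaining of Lemma~\ref{l:a5}. The only differences are cosmetic (e.g., you fix the good tuple $(2,\ldots,2)$ where the paper allows any tuple avoiding good $1$).
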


Relative to Theorem~\ref{thm:main}, independence of $\beta_{1,1}$ from the other components allows us to relate the $M$-th and $M-1$ order moments. To see this, consider some $M$-th order moment in which $\beta_{1,1}$ appears exactly once. Using independence, we obtain
\[
\int \beta_{1,1} \beta_{k_2, \ell_2} \cdots \beta_{k_M, \ell_M} d\nu (\beta) = \int \beta_{1,1} d\nu (\beta) \int \beta_{k_2, \ell_2} \cdots \beta_{k_M, \ell_M} d\nu (\beta). 
\]
In the proof, we show that $\int \beta_{1,1} d\nu (\beta)$ can be identified when $| \int \beta_{1,1} d\nu (\beta)|$ is finite, known \textit{a priori}, and non-zero. With this knowledge, we can identify the ratio of all $M$-th order moments to all $(M-1)$-th order moments and apply induction to identify all $M \le \bar{M}$ order moments.

\begin{remark}[One good]
When $K \geq 2$, the assumptions of Proposition~\ref{prop:alt} impose that there are multiple relevant characteristics. The additional assumption in the $K = 1$ case is a relevance condition on a characteristic other than $x_{1,1}$.
\end{remark}

\begin{remark}[``Normalizations'']
Provided $\int \beta_{1,1} \nu (d \beta)$ exists and is nonzero, it is a normalization to set $| \int \beta_{1,1} \nu (d \beta) | = 1$. In other words, this imposes no additional restrictions on the model. This is seen by noting that if we divide the original latent utility model by $| \int \beta_{1,1} \nu (d \beta) |$, then the argmax set does not change and none of the assumptions in Proposition~\ref{prop:alt} are affected by this division.

A natural intuition is that when $\beta_{1,1} > 0$ almost surely, it is also a normalization to divide the latent utility by $\beta_{1,1}$ and rewrite the problem with $\beta_{1,1} = 1$. This is true if we only inspect the original latent utility model (Equation~\ref{eq:latent}), but is no longer true when we consider independence or certain other additional assumptions on the integrated model in Assumption~\ref{a:maint}. Recall that Assumption~\ref{a:factor} means $\beta$ and $\eps$ are independent in the definition of the average structural function $\overline{Y}$. In general, this assumption is not invariant to division by $\beta_{1,1}$. This means that setting $\beta_{1,1} = 1$ (a.s.) provides additional restrictions relative to the assumptions of Proposition~\ref{prop:alt}.
\end{remark}

\section{Welfare Analysis and Counterfactuals}\label{sec:welfare}

We now turn to identification of certain welfare and counterfactual objects. Identification is established given identification of certain features of $V$, which is the indirect utility function obtained when random intercepts are integrated out. We first provide three results that identify differences in $V$. Using these results, we discuss welfare analysis and counterfactuals.

The reason we identify $V$ is that we can use the envelope theorem to determine certain average choices
\[
\overline{Y}(x,\beta) = \nabla V(\beta'_1 x_1, \ldots, \beta'_K x_K).
\]
We require identification of the right hand side at values other than $0$ to consider counterfactuals at new values of covariates.

\subsection{Identification of $V$}

We first provide conditions under which identification of partial derivatives of $V$ at $0$ allows us to directly extrapolate the function. Specifically, we assume $V$ is a real analytic function. That is, $V$ has derivatives of all orders and agrees with its Taylor series in a neighborhood of every point. Real analytic functions have the important property that local information can be used to reconstruct the function globally by extrapolating. This is similar to common parametric classes of functions. However, the set of real analytic functions is infinite dimensional.

\begin{cor} \label{cor:vid}
Let the assumptions of Theorem~\ref{thm:main} or the assumptions of Proposition~\ref{prop:alt} hold with $\overline{M} = \infty$. If $V$ is a real analytic function, then it is identified up to an additive constant.
\end{cor}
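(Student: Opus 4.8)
The plan is to leverage the fact that under the stated hypotheses (those of Theorem~\ref{thm:main} or Proposition~\ref{prop:alt} with $\overline{M} = \infty$), all higher-order partial derivatives of $V$ at the origin are identified, and then to invoke real analyticity to propagate this local information to the whole domain. First I would recall that both Theorem~\ref{thm:main} and Proposition~\ref{prop:alt} conclude that for \emph{every} natural number $M$ and every index tuple $\gamma \in \{1,\ldots,K\}^{M+1}$, the derivative $\partial_\gamma V(0)$ is identified. Since these results are assumed to hold for all $M \le \overline{M} = \infty$, this means the entire collection of partial derivatives $\{\partial_\gamma V(0) : \gamma \in \{1,\ldots,K\}^{M+1},\ M \ge 1\}$ is identified. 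Equivalently, every partial derivative of $V$ at the origin of order at least $2$ is identified.

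Next I would address the first-order derivatives and the additive constant. By the envelope theorem (Lemma~\ref{l:env}), the first-order derivatives $\partial_k V(0)$ equal $\overline{Y}_k(0)$, which is known by Assumption~\ref{a:reg}(v), so the gradient at $0$ is identified as well. The value $V(0)$ itself cannot be pinned down, since adding a constant to $V$ does not change $\overline{Y} = \nabla V$ and hence is observationally equivalent; this is precisely why the conclusion is identification only up to an additive constant. At this stage every Taylor coefficient of $V$ at $0$ of order $\ge 1$ is identified, and the order-$0$ coefficient is free.

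Finally I would use real analyticity to reconstruct $V$ globally. By definition, a real analytic $V$ agrees with its Taylor series in a neighborhood of every point, and in particular in a neighborhood of $0$ we have
\[
V(u) = V(0) + \sum_{M \ge 1} \frac{1}{M!} \sum_{\gamma \in \{1,\ldots,K\}^{M}} \partial_\gamma V(0)\, u_{\gamma_1} \cdots u_{\gamma_M}.
\]
Since all coefficients of order $\ge 1$ are identified, the function $V$ is identified on this neighborhood up to the free constant $V(0)$. The key analytic fact I would then invoke is that a real analytic function on a connected domain is uniquely determined by its restriction to any open subset (the identity theorem for real analytic functions). Hence knowing $V$ up to a constant on a neighborhood of $0$ determines it up to the same constant everywhere on its connected domain.

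The main obstacle is the last step: I must ensure that the domain of $V$ is connected so that the identity theorem applies and the local reconstruction genuinely extends to the entire function. The domain of $V$ is the set of index vectors $(\beta_1'x_1,\ldots,\beta_K'x_K)$ arising in the model, which is an interval (indeed a convex set in $\mathbb{R}^K$ by the structure of the maximization problem in Assumption~\ref{a:maint}), so connectedness holds and the extrapolation is valid. I would also confirm that the interchange of the identified Taylor coefficients with the analytic continuation introduces no ambiguity beyond the additive constant, which follows because two real analytic functions sharing all derivatives of order $\ge 1$ at a point differ at most by a constant.
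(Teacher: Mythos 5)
Your proposal is correct and follows essentially the same route the paper intends (the paper gives no separate proof, treating the corollary as immediate): Theorem~\ref{thm:main} or Proposition~\ref{prop:alt} with $\overline{M}=\infty$ identifies all partial derivatives $\partial_\gamma V(0)$ of order at least two, the envelope theorem pins down $\nabla V(0)=\overline{Y}(0)$, and real analyticity on the (convex, hence connected) domain extends this local Taylor information globally, leaving only the additive constant $V(0)$ free. Your added care about first-order terms, connectedness, and the identity theorem fills in exactly the details the paper leaves implicit.
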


One way to drop the assumption that $V$ is a real analytic function is to instead assume $\beta_{k,1} = 1$ almost surely for each $k$. With this assumption, let $\tilde{x}$ be a value that is zero for every characteristic except the first characteristic of each good. Then the envelope theorem (Lemma~\ref{l:env}) specializes to
\[
\overline{Y}_k(\tilde{x},\beta) = \partial_k V(\tilde{x}_{1,1}, \ldots, \tilde{x}_{K,1}).
\]
This does not depend on $\beta$, and so by taking expectations, the average structural function identifies the derivative of $V$ at the point $(\tilde{x}_{1,1}, \ldots, \tilde{x}_{K,1})$. By integrating the derivatives we can identify differences in $V$, as we now formalize.

\begin{prop} \label{prop:idhomog}
Let Assumption~\ref{a:maint} hold and assume $\beta_{k,1} = 1$ almost surely for each $k\in \{1,\ldots,K\}$. Suppose $\overline{Y}(x)$ is identified for all $x = (x_1', \ldots, x_K')'$ satisfying $x_{k,1} \in [\underline{x}_{k,1}, \overline{x}_{k,1}]$ for each $k$, and $x_{k,j} = 0$ for $j > 1$. Then differences in $V$ are identified over the region $\times_{k = 1}^K [\underline{x}_{k,1}, \overline{x}_{k,1}]$. In particular, if $\underline{x}_{k,1} = -\infty$ and $\overline{x}_{k,1} = \infty$ for each $k$, then $V$ is identified up to an additive constant.
\end{prop}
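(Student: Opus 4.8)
The plan is to combine the envelope theorem (Lemma~\ref{l:env}) with the normalization $\beta_{k,1}=1$ to read off the entire gradient of $V$ pointwise over the relevant region, and then recover $V$ itself, up to an additive constant, by integrating that gradient along straight-line paths inside the box-shaped region. The key simplification is that, under the stated restriction on $x$ and the normalization, the index entering each good's utility no longer depends on $\beta$, so the average structural function reveals a derivative of $V$ directly.

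First I would fix a point $x$ with $x_{k,j}=0$ for all $j>1$ and observe that the index for good $k$ collapses, since $\beta_k' x_k = \beta_{k,1} x_{k,1} = x_{k,1}$ almost surely by $\beta_{k,1}=1$. Applying Lemma~\ref{l:env} then gives $\overline{Y}_k(x,\beta)=\partial_k V(x_{1,1},\ldots,x_{K,1})$, an expression that does not depend on $\beta$. Taking expectations over $\nu$ is therefore trivial (the integrand is constant in $\beta$), yielding $\overline{Y}_k(x)=\partial_k V(x_{1,1},\ldots,x_{K,1})$; notably, no interchange of differentiation and integration and none of the relevance conditions are needed here, which is why only Assumption~\ref{a:maint} is invoked. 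Because $\overline{Y}(x)$ is assumed identified on the stated region, this identifies $\partial_k V(u)$ at every $u=(u_1,\ldots,u_K)$ in the box $\times_{k=1}^K[\underline{x}_{k,1},\overline{x}_{k,1}]$, for each coordinate $k$, so the whole gradient $\nabla V$ is identified there.

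Next I would recover differences in $V$ via the gradient theorem for line integrals. For any two points $u,u'$ in the box, the straight segment between them stays in the box because a product of intervals is convex, so
\[
V(u')-V(u)=\int_0^1 \nabla V\big(u+t(u'-u)\big)\cdot (u'-u)\,dt,
\]
and every quantity on the right is identified by the previous step. This pins down all differences $V(u')-V(u)$, i.e.\ $V$ up to an additive constant on the box. In the limiting case $\underline{x}_{k,1}=-\infty$ and $\overline{x}_{k,1}=\infty$ for every $k$, the box is all of $\mathbb{R}^K$, giving identification of $V$ up to a constant globally.

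The only point requiring care is the legitimacy of the line-integral step, namely that $\nabla V$ is a genuine continuous gradient so that the fundamental theorem applies and the integral is path-independent. This is not an additional hypothesis: by Lemma~\ref{l:agg}, $V$ is a supremum of functions affine in its arguments and hence convex, while Assumption~\ref{a:maint} delivers a unique maximizer and therefore differentiability of $V$; a convex function that is differentiable throughout an open convex set is automatically $C^1$, so $\nabla V$ is continuous and the integral above is well defined and path-independent. The remaining delicacy concerns behavior at the possibly infinite interval endpoints, but since the claim concerns differences between points of the region and the relevant integration occurs in the interior, continuity of $\nabla V$ on the interior suffices and the boundary causes no difficulty.
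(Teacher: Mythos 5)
Your proposal is correct and follows essentially the same route as the paper: the normalization $\beta_{k,1}=1$ collapses the indices so the envelope theorem (Lemma~\ref{l:env}) identifies $\nabla V$ on the box directly from $\overline{Y}$, and differences in $V$ are then recovered by a line integral of the gradient along segments in the convex region. The only cosmetic difference is the justification of the integration step---you invoke the fact that a differentiable convex function is automatically $C^1$, whereas the paper cites Riemann integrability of the gradient of a convex function from \cite{rockafellar1970convex}---but these are interchangeable convex-analysis facts supporting the identical argument.
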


The results on welfare and counterfactual analysis require that derivatives of $V$ be identified at certain values $(\beta'_1 x_1, \cdots, \beta'_K x_K)$. If the support of $\beta$ is compact, then it is not necessary to identify $V$ everywhere, and so it is not necessary to have $\underline{x}_{k,1} = -\infty$ and $\overline{x}_{k,1} = \infty$ to apply Proposition~\ref{prop:idhomog}.

Finally, we mention a third way to identify differences in $V$. A key distinction is that it requires identification of the distribution of $\overline{Y}(x,\beta)$ for fixed $x$, rather than identification of the average structural function as in the rest of the paper. We adapt the following lemma.
\begin{lemma}[\cite{mccann1995existence}; statement from \cite{chernozhukov2019single}, Corollary 2] \label{lem:chern}
Let $W = f(\eta)$, where $W$ and $\eta$ have the same finite dimension. Suppose $f$ is the gradient of a convex function, $\eta$ has a known distribution that is absolutely continuous with respect to Lesbesgue measure, the distribution of $W$ is known, and $\eta$ and $W$ have finite variance. It follows that $f$ is identified.
\end{lemma}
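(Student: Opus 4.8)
The statement is an identification (uniqueness) claim, so the plan is to show that the two known distributions pin down $f$ uniquely. Write $\mu$ for the (known, absolutely continuous) law of $\eta$ and $\rho$ for the (known) law of $W$. Since $W = f(\eta)$, the map $f$ pushes $\mu$ forward to $\rho$, i.e. $f_\#\mu = \rho$, and by hypothesis $f = \nabla\phi$ for some convex $\phi$. Identification then amounts to the following uniqueness statement: if $f_1 = \nabla\phi_1$ and $f_2 = \nabla\phi_2$ are both gradients of convex functions with $(f_1)_\#\mu = (f_2)_\#\mu = \rho$, then $f_1 = f_2$ $\mu$-almost everywhere. First I would record that, because each $\phi_i$ is convex and $\mu$ is absolutely continuous with respect to Lebesgue measure, $\phi_i$ is differentiable $\mu$-a.e. (convex functions are locally Lipschitz, hence differentiable Lebesgue-a.e. by Rademacher), so each $f_i$ is well defined $\mu$-a.e.

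The core of the argument is the Brenier--McCann theorem on quadratic optimal transport. The plan is to show each $f_i$ is an optimal transport map between $\mu$ and $\rho$ for the quadratic cost $c(x,y) = \tfrac12|x-y|^2$, and then invoke uniqueness of the optimal coupling. Because $\eta$ and $W$ have finite variance, $\int |x|^2 \, d\mu(x)$ and $\int |y|^2 \, d\rho(y)$ are finite, so minimizing $\int \tfrac12|x-y|^2 \, d\gamma(x,y)$ over couplings $\gamma$ of $\mu$ and $\rho$ is equivalent to maximizing $\int \langle x, y\rangle \, d\gamma(x,y)$, and the transport cost is finite. By Rockafellar's theorem, the graph of $\nabla\phi_i$ is cyclically monotone; hence the coupling $\gamma_i := (\mathrm{id}, f_i)_\#\mu$ is concentrated on a cyclically monotone set. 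A coupling supported on a cyclically monotone set is optimal for the quadratic cost (the Knott--Smith / Brenier optimality criterion), so both $\gamma_1$ and $\gamma_2$ solve the same Monge--Kantorovich problem.

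It remains to upgrade optimality to uniqueness. Here absolute continuity of $\mu$ is essential: Brenier's theorem guarantees that the optimal coupling between $\mu$ and $\rho$ is unique and is induced by a map of the form $\nabla\phi$. Thus $\gamma_1 = \gamma_2$ as measures on the product space, and since both are graphs over the first marginal, $f_1 = f_2$ $\mu$-a.e. Applying this with $f_1 = f$ and $f_2$ any competing structural map shows $f$ is identified. Equivalently, one may simply invoke \cite{mccann1995existence} directly, which asserts existence and $\mu$-a.e. uniqueness of the monotone map transporting $\mu$ to $\rho$ whenever $\mu$ is absolutely continuous.

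The main obstacle is the uniqueness step, and it is genuinely where absolute continuity of $\mu$ does the work: without it a convex potential need not be differentiable $\mu$-a.e. and the optimal plan need not be a map, so the conclusion can fail. The finite-variance hypothesis is the other load-bearing assumption, used to guarantee the quadratic transport cost is finite so that optimality, cyclical monotonicity, and the reduction to maximizing $\int \langle x, y\rangle \, d\gamma$ are all valid; everything else (Rademacher differentiability, Rockafellar's characterization, and the optimality of cyclically monotone couplings) is standard convex-analytic bookkeeping once these two hypotheses are in place.
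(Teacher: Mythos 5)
Your proof is correct. One thing to flag up front: Lemma~\ref{lem:chern} is imported verbatim from the literature (\cite{mccann1995existence}, statement taken from \cite{chernozhukov2019single}, Corollary 2), and the paper offers no proof of its own, so there is no in-paper argument to compare against line by line; what you have done is reconstruct the standard argument behind the citation. Your route --- the push-forward formulation of identification, Rademacher differentiability of the convex potentials, Rockafellar's cyclical monotonicity of subdifferentials, the Knott--Smith optimality criterion for the quadratic cost, and Brenier's uniqueness theorem when the first marginal is absolutely continuous --- is the canonical proof of this uniqueness statement, and each ingredient is invoked under hypotheses that make it valid, including the final disintegration step identifying $f_1 = f_2$ $\mu$-a.e.\ from equality of the graph couplings. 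The one substantive comparison worth recording is where the finite-variance hypothesis does its work: your Brenier-style argument needs it essentially, both to make the quadratic transport cost finite and to justify the equivalence between ``supported on a cyclically monotone set'' and ``optimal'' (and the reduction to maximizing $\int \langle x, y \rangle \, d\gamma$). By contrast, the theorem actually proved in \cite{mccann1995existence} --- which you correctly note can be invoked directly --- establishes existence and $\mu$-a.e.\ uniqueness of the monotone transport map assuming only absolute continuity of $\mu$, with no moment conditions; McCann's uniqueness proof deliberately avoids the duality machinery precisely to dispense with them. So your argument proves the lemma exactly as stated (the finite-variance assumption is part of the statement, following Corollary 2 of \cite{chernozhukov2019single}), but it is strictly less general than the result attributed to McCann; dropping finite variance would require replacing the Knott--Smith/Brenier steps with McCann's argument.
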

This result can be applied to our setting by adapting the envelope theorem,
\[
\overline{Y} (x,\beta) = \nabla V(\beta_1 ' x_1, \ldots, \beta'_K x_K).
\]
Interpret $W = \overline{Y}(x,\beta)$, $f = \nabla V$, and $\eta = (\beta_1' x_1, \ldots, \beta_K' x_K)$. When $x$ is fixed and the distribution of $\beta$ is identified (from previous arguments), the distribution of $\eta$ is known. The function $V$ is convex, and so the lemma provides conditions under which $\nabla V$ is identified. Importantly, the lemma can be applied at a \textit{single} $x$, so it is not necessary to have full support of covariates to apply the result. Such an $x$ cannot be arbitrary. For example, when $x = 0$ the distribution corresponding to $\eta$ is not absolutely continuous. The lemma can still be applied for $x$ near but not equal to $0$. Moreover, to apply the lemma, the distribution of $\beta$ cannot be degenerate, i.e. there must be truly ``random'' coefficients. If $\beta$ is almost surely equal to a constant, then the distribution corresponding to $\eta$ is not absolutely continuous and the lemma does not apply.

Importantly, to apply Lemma~\ref{lem:chern} in our setting, the distribution of $\overline{Y}(x,\beta)$ must be identified at some fixed $x$. One example in which this lemma can be applied is when $\eps$ in the original latent utility model is not present, so that $\overline{Y}(x,\beta)$ corresponds to the observable choices given $x$ and $\beta$. Such structure could be appropriate in a continuous choice model in which all unobservable heterogeneity is controlled by the random slopes $\beta$, and in which the choices (rather than e.g. average choices for a group of individuals) are observed.

\subsection{Welfare}
We now describe how identification of $V$ leads to identification of certain welfare objects. First, recall that $V$ may be interpreted as the indirect utility conditional on the utility index (Lemma~\ref{l:agg}) where the random intercept $\eps$ under the measure $\mu$ is integrated out. To interpret $V(\cdot)$ as a welfare object, suppose $\beta$ is an individual-specific term that is random across the population but constant across decisions for the same individual. Interpret the random intercept $\eps$ as an idiosyncratic taste shock across decision problems. Then $V(\beta'_1 x_1, \ldots, \beta'_K x_K)$ is an individual-specific (integrated) indirect utility. The conditions of Corollary~\ref{cor:id} identify the distribution of $V(\beta'_1 x_1, \ldots, \beta'_K x_K)$ under the measure $\nu$ up to an additive constant once the values of covariates are fixed.

Thus, we can identify the distribution of individual-specific indirect utilities. By further integrating out the distribution of $\beta$, we also identify differences in the average indirect utility via Lemma~\ref{l:agg}:
\begin{align} \label{eq:weirdagg}
\int V (\beta'_1 x_1, \ldots, \beta'_K x_K) d\nu ( \beta)
= \int \int \sup_{y \in B} \sum_{k = 1}^K y_k (\beta'_k x_k) + D(y, \eps) d\mu (\eps) d\nu (\beta).
\end{align}
This result holds regardless of whether the distribution of $\eps$ is identified since $V$ is a welfare-relevant summary measure of the distribution of $\eps$. Indeed, we do not establish identification of the distribution of
\[
\sup_{y \in B} \sum_{k = 1}^K y_k (\beta'_k x_k) + D(y, \eps)
\]
according to the product measure $\mu \times \nu$ over $(\beta,\eps)$. In particular, since this paper does not study identification of $\eps$, we do not identify the distribution of indirect utilities including random intercepts.

Note that the units of (\ref{eq:weirdagg}) are relative to the scale assumption used to identify the distribution of $\beta$. If we impose the scale assumptions in Theorem~\ref{thm:main} to apply Corollary~\ref{cor:vid}, then the distribution of $\beta_{1,1}$ is fixed. Thus, the units of Equation~\ref{eq:weirdagg} are set by the \textit{distribution} of the conversion rate between $x_{1,1}$ and utils. In contrast, if we impose the conditions of Proposition~\ref{prop:alt}, then the scale is determined by $\left| \int \beta_{1,1}  \nu (\beta) \right|$. For this case, the units of Equation~\ref{eq:weirdagg} are relative to the \textit{average} conversion rate between $x_{1,1}$ and utils. 

An alternative measure of average indirect utility is
\begin{align} \label{eq:constantagg}
\int \frac{1}{|\beta_{1,1}|} V (\beta'_1 x_1, \ldots, \beta'_K x_K) d\nu ( \beta)
= \int \int \frac{1}{|\beta_{1,1}|} \sup_{y \in B} \sum_{k = 1}^K y_k (\beta'_k x_k) + D(y, \eps) d\mu (\eps) d\nu (\beta).
\end{align}
This sets the conversion rate of $x_{1,1}$ and utils to $\pm 1$. Importantly, this preserves whether the first characteristic is desirable or undesirable. It also forces the intensity of preference to be constant across individuals. This welfare measure is most interpretable when the regressor has a homogeneous sign. For example, if $x_{1,1}$ is the (negative) price of good $1$, then $\beta_{1,1} < 0$ is a natural assumption, and the units of Equation~\ref{eq:constantagg} are in dollars.

\subsection{Counterfactuals}
Once $V$ is identified, we can also answer certain counterfactual questions involving quantities at new values of covariates. To this end, recall from Lemma~\ref{l:env} that
\begin{equation} \label{eq:lem2restate}
\overline{Y}_k(x, \beta) = \partial_k V(\beta'_1 x_1, \ldots, \beta'_K x_K).
\end{equation}
Here, $\overline{Y}_k(x, \beta)$ is the demand for good $k$ fixing covariates and the random intercept, but integrating out the distribution of $\eps$. We interpret $\beta$ as an individual-specific parameter that is constant across decision problems, while $\eps$ is an idiosyncratic shock that can vary across decision problems. Thus, $\overline{Y}_k(x, \beta)$ is the individual-specific average quantity of the $k$-th good. Once $V$ and the distribution of $\beta$ are identified, we can identify the distribution of $\overline{Y}_k(x,\cdot)$ from Equation~\ref{eq:lem2restate}.

Conceptually, this shows it is possible to start with identification of the average structural function (``mean choices'') around $x = 0$ to identify the integrated choices $\overline{Y}_k(x,\beta)$ at all values of the covariates. This also implies that any value $x$ at which $\overline{Y}$ can be identified directly from data (as opposed to the theoretical analysis just described) provides overidentifying information.

\section{Discussion}\label{sec:discussion}

We now provide additional discussion of the main results in the paper.\\

\begin{remark}[Location of Taste Homogeneity]
Our analysis can be applied to settings in which covariates do not vary around zero. For example, we could recenter via $\tilde{X} = X - \E[X]$ so that identification uses variation in the average structural function around the mean, rather than variation around $0$. This is noted as well in \cite{fox2012random}. Importantly, the assumptions in this paper are not typically invariant to recentering. Thus, the assumptions must be made \textit{given} a particular centering at which the average structural function is identified. In addition, the location of recentering defines where the random slopes do not alter preferences since $\beta'x = 0$ when $x = 0$. In words, the choice of centering sets a location of taste homogeneity with regard to the random slopes (but not random intercepts).\\
\end{remark}

\begin{remark}[Nonlinear Random Coefficients]
The results may be adapted to certain models in which coefficients are not linear. Suppose instead of the linear index $\beta'_k x_k$, we have $x^{\rho_k}_k$ for a scalar shifter $x_k$ and random exponent $\rho_k$. Applying the envelope theorem yields
\[
\partial_{x_j} \overline{Y}_k(x, \rho) = \partial_{j,k} V(x_1^{\rho_1}, \ldots, x^{\rho_K}_K) \rho_j x^{\rho_j - 1}_j
\]
where $\rho = (\rho_1, \ldots, \rho_K)$ collects the random exponents. In particular, the partial derivatives of $V$ can be evaluated at a vector of ones so that $x^{\rho_j - 1}_j = 1$ for any random exponent. By taking expectations with respect to $\rho$, evaluating the above equation around covariates equal to one, and using symmetry of mixed partial derivatives, we obtain
\[
\frac{\partial_{x_j} \overline{Y}_k(1)}{\partial_{x_k} \overline{Y}_j(1) } = \frac{\E[ \rho_j ]}{ \E[ \rho_k ]}.
\]
Ratios of higher order moments can be identified by considering additional derivatives of the average structural function, and by using symmetry of mixed partials of $V$ evaluated at the vector of ones.\\
\end{remark}

\begin{remark}[Complementarity and Derivatives of $V$]
Recall that the envelope theorem yields
\[
\partial_{x_j} \overline{Y}_k(0, \beta) = \partial_{j,k} V(0) \beta_j.
\]
Thus, second order mixed partial derivatives of $V$ describe how changes in the utility index of good $x_j$ alter the demand for good $k$. The sign of $\partial_{j,k} V(0)$ describes whether goods are local complements ($\partial_{j,k} V(0) \geq 0$) or substitutes ($\partial_{j,k} \leq V(0)$). Theorem~\ref{thm:main} provides conditions under which this derivative is identified, and thus we obtain information on complementarity/substitability with random coefficients. Several papers have studied complementarity in the bundles model (Example~\ref{ex:bundles}) when characteristics shift the marginal utility of a good homogenously (\cite{gentzkow2007,fox2015,chernozhukov2015constrained,hicks,bundles}). To our knowledge, the only paper that studies identification of complementarity in the bundles model with heterogeous tastes for characteristics is \cite{dunker2017nonparametric}. Rather than working with only the average structural function (``mean demands''), \cite{dunker2017nonparametric} require identification of the frequencies that goods are chosen together.\\
\end{remark}

\begin{remark}[Homogeneity of Coefficients]
We do not require the assumption that coefficients are the same across goods, $\beta_j = \beta_k$. One interpretation of this assumption is that preferences are driven by observable characteristics \citep{gorman1980possible,lancaster1966new}, not the label of the good ($j$ vs $k$). In this paper, we assume that the shifters associated with $\beta_k$ vary for good $k$. Thus, setting $\beta_j = \beta_k$ means the shifters that vary for good $j$ are the same as for good $k$. This assumption is inconsistent with some empirical settings, especially outside of discrete choice. For example, it is not satisfied for the bundles model of \cite{gentzkow2007} where internet speed varies for online news but not print news.

Placing restrictions relating coefficients across different goods allows one to either weaken the conditions used for identification or use alternative techniques. See Appendix~\ref{supp:cherno} for additional discussion and relation to \cite{chernozhukov2019nonseparable}.\\
\end{remark}

\begin{remark}[Testability]
The conditions of Theorem~\ref{thm:main} imply testable implications because of theoretical relationships between different moments. To see this, we revisit the system of equations (\ref{eq:example}) used previously to illustrate the identification technique. Dividing the first and third equations, and the second and fourth equations, we obtain
\begin{align*}
 \frac{\partial_{x_1} \partial_{x_1} \overline{Y}_2(0)}{\partial_{x_2} \partial_{x_1} \overline{Y}_1(0)} & = \frac{\int \beta^2_1 \nu(d \beta)}{\int \beta_1 \beta_2 \nu(d \beta)}
 & \frac{\partial_{x_2} \partial_{x_2} \overline{Y}_1(0)}{\partial_{x_1} \partial_{x_2} \overline{Y}_2(0)} & = \frac{\int \beta^2_2 \nu(d \beta)}{\int \beta_2 \beta_1 \nu(d \beta)}.
\end{align*}
Multiplying these equations and using the Cauchy-Schwarz inequality yields the testable restriction
\[
 \frac{\partial_{x_1} \partial_{x_1} \overline{Y}_2(0)}{\partial_{x_2} \partial_{x_1} \overline{Y}_1(0)} \cdot \frac{\partial_{x_2} \partial_{x_2} \overline{Y}_1(0)}{\partial_{x_1} \partial_{x_2} \overline{Y}_2(0)} \geq 1.
\]
Note that this inequality only concerns the average structural function close to $0$.
\end{remark}

\begin{remark}[Determinacy by Moments]
The results in this paper establish identification of certain moments of the distribution of random coefficients. It is natural to wonder whether the distribution can be identified without requiring that it be uniquely determined by its moments. A related question has been studied previously. Specifically, consider the setting
\[
Y = A + B'Z,
\]
where $Z$ is independent of $(A,B)$ and the distribution of $(Y,Z)$ is identified. Building on \cite{belisle1997probability}, \cite{masten2017random} shows that, if the support of $Z$ is bounded, then identification of either $(A,B)$ or even the marginal distributions $B_k$ requires that the distribution $(A,B)$ is determined by its moments. In light of this, it is possible that given only the conditions of Theorem~\ref{thm:main} for each $M$, identification of the distribution of $\beta$ requires that it be determined by its moments. Recall that we only assume identification of the average structural function $\overline{Y}(x)$ in a bounded set around $0$.
\end{remark}

\bibliographystyle{plainnat}
\bibliography{ref}

\newpage

\begin{appendices}	

\counterwithin{theorem}{section}
\counterwithin{prop}{section}
\counterwithin{lemma}{section}
\counterwithin{cor}{section}
\counterwithin{assm}{section}
\counterwithin{defn}{section}
\counterwithin{remark}{section}

\section{Proofs of Main Results}

\subsection{Preliminary Lemmas}

\begin{proof}[Proof of Lemma~\ref{l:agg}]
This follows line by line from the proof of \cite{allen2019identification}, Theorem 1. The statement of that result included an additional Assumption 1, which was not used in the proof as long as the underlying choice is appropriately measurable. Here, we start with $Y$ of the form $Y(X,\beta,\eps)$, which is automatically a measurable function.
\end{proof}

\begin{proof}[Proof of Lemma~\ref{l:env}]
See \cite{allen2019identification}, Lemma 1. The result may also be directly proven from \cite{rockafellar1970convex}, Theorems 23.5 and 25.1.
\end{proof}

\begin{proof}[Proof of Lemma~\ref{l:sym}]
The function $V$ is convex. The result then follows from \cite{rockafellar1970convex}, Theorem 4.5, plus repeated differentiation.
\end{proof}

\subsection{Proof of Theorem~\ref{thm:main}} \label{app:thm1}

The following lemmas maintain the assumptions of Theorem~\ref{thm:main}. These assumptions ensure the requisite smoothness assumptions and ensure that the following arguments do not divide by zero.

In order to simplify presentation, we require some additional notation. Let $(\gamma, \xi)$ be a tuple with $\gamma \in \{ 1, \ldots, K \}^M$ denoting good indices, and let $\xi_k \in \{1, \ldots d_{\gamma_k} \}$ describe which characteristic corresponds to the $\gamma_k$-th good.
We set
\[
\partial_{(\gamma,\xi)} \overline{Y}_k(0,\beta) = \partial_{x_{\gamma_1, \xi_1}} \cdots \partial_{x_{\gamma_M,\xi_M}} \overline{Y}_k(0,\beta).
\]
As shorthand we also write multiplication of the coefficients of $\beta$  for the characteristics of $(\gamma,\xi)$ as 
\[
\beta_{(\gamma,\xi)} = \beta_{\gamma_1, \xi_1} \cdots \beta_{\gamma_M, \xi_M}.
\]
\begin{lemma} \label{l:main1}
\[
\partial_{(\gamma, \xi)} \overline{Y}_k(0,\beta) = \partial_{\gamma} \partial_k V(0) \beta_{(\gamma,\xi)}.
\]
\end{lemma}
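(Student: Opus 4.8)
The goal is to show that applying the $M$-fold mixed partial $\partial_{(\gamma,\xi)} = \partial_{x_{\gamma_1,\xi_1}}\cdots\partial_{x_{\gamma_M,\xi_M}}$ to $\overline{Y}_k(x,\beta)$ and evaluating at $x=0$ factors into $\partial_{\gamma}\partial_k V(0)$ times the monomial $\beta_{(\gamma,\xi)}$. The natural approach is induction on $M$ using the envelope theorem (Lemma~\ref{l:env}) together with the chain rule. First I would record the base object: by Lemma~\ref{l:env}, $\overline{Y}_k(x,\beta) = \partial_k V(\beta_1'x_1,\ldots,\beta_K'x_K)$, so the dependence of $\overline{Y}_k$ on $x$ enters only through the $K$ indices $u_j := \beta_j' x_j = \sum_\ell \beta_{j,\ell}x_{j,\ell}$. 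The key structural fact is the exclusion restriction in Assumption~\ref{a:cont}: the regressor $x_{\gamma_m,\xi_m}$ appears only in the index $u_{\gamma_m}$, and $\partial u_{\gamma_m}/\partial x_{\gamma_m,\xi_m} = \beta_{\gamma_m,\xi_m}$, a constant in $x$. Hence each differentiation $\partial_{x_{\gamma_m,\xi_m}}$ acting on a function of the indices pulls out exactly one factor $\beta_{\gamma_m,\xi_m}$ and advances one index-derivative $\partial_{\gamma_m}$ on $V$.

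Concretely, I would prove by induction that for any $1\le m\le M$,
\[
\partial_{x_{\gamma_1,\xi_1}}\cdots\partial_{x_{\gamma_m,\xi_m}}\overline{Y}_k(x,\beta) = \left(\partial_{\gamma_1}\cdots\partial_{\gamma_m}\partial_k V\right)(\beta_1'x_1,\ldots,\beta_K'x_K)\,\beta_{\gamma_1,\xi_1}\cdots\beta_{\gamma_m,\xi_m}.
\]
The case $m=1$ is a single application of the chain rule to $\partial_k V(u_1,\ldots,u_K)$, using $\partial u_{\gamma_1}/\partial x_{\gamma_1,\xi_1}=\beta_{\gamma_1,\xi_1}$ and the fact that no other index depends on $x_{\gamma_1,\xi_1}$ by the exclusion restriction. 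For the inductive step, I would differentiate the order-$m$ expression with respect to $x_{\gamma_{m+1},\xi_{m+1}}$: the monomial in $\beta$ is constant in $x$ and passes through the derivative, while the chain rule applied to the higher-order derivative of $V$ produces the factor $\beta_{\gamma_{m+1},\xi_{m+1}}$ and the additional index-derivative $\partial_{\gamma_{m+1}}$. Setting $m=M$ and evaluating at $x=0$ (where each index $u_j=\beta_j' 0 = 0$) yields exactly $\partial_\gamma\partial_k V(0)\,\beta_{(\gamma,\xi)}$, which is the claim.

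The differentiations are valid because Assumption~\ref{a:reg}(iii) guarantees $V$ is $(M+1)$-times continuously differentiable near $0$, so all the mixed partials of order up to $M+1$ exist, are continuous, and the chain rule applies. I do not expect a serious obstacle here: the lemma is essentially a bookkeeping identity, and the main point to state carefully is that the exclusion restriction is what makes $\partial u_j/\partial x_{\gamma_m,\xi_m}$ vanish for $j\neq\gamma_m$ and equal the constant $\beta_{\gamma_m,\xi_m}$ for $j=\gamma_m$, so that no cross terms or $x$-dependent coefficients appear. The only mild care needed is to keep the index accounting straight so that the $m$-th differentiation attaches $\partial_{\gamma_m}$ to $V$ and the factor $\beta_{\gamma_m,\xi_m}$ to the monomial, matching the notation $\partial_\gamma = \partial_{\gamma_1}\cdots\partial_{\gamma_M}$ and $\beta_{(\gamma,\xi)} = \beta_{\gamma_1,\xi_1}\cdots\beta_{\gamma_M,\xi_M}$ introduced above.
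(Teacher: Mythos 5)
Your proposal is correct and follows essentially the same route as the paper's proof: apply the envelope theorem (Lemma~\ref{l:env}), use the chain rule together with the exclusion restriction so that each differentiation $\partial_{x_{\gamma_m,\xi_m}}$ pulls out the constant factor $\beta_{\gamma_m,\xi_m}$ and one index-derivative $\partial_{\gamma_m}$ of $V$, then iterate and evaluate at $x=0$. Your explicit induction, which keeps the identity at general $x$ and only evaluates at $x=0$ after the $M$-th differentiation, is in fact a slightly more careful rendering of the paper's ``repeat the differentiation process'' step.
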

\begin{proof}
Lemma~\ref{l:env} establishes
\[
\overline{Y}_k(x, \beta) = \partial_k V(\beta'_1 x_1, \ldots, \beta'_K x_K).
\]
Differentiating with respect to $x_{\gamma_1,\xi_1}$ and evaluating at $x=0$ yields
\[
\partial_{\gamma_1, \xi_1} \overline{Y}_k(0,\beta) = \partial_{\gamma_1} \partial_k V(0) \beta_{\gamma_1, \xi_1}.
\]
By repeating the differentiation process and evaluating at $x=0$ the result follows. Note that in this step, we use the exclusion restriction that the $j$-th regressors $x_j$ are excluded from the desirability indices of the other goods.
\end{proof}

\begin{lemma} \label{l:a2}
\[
\partial_{(\gamma, \xi)} \overline{Y}_k(0) = \partial_{\gamma} \partial_k V(0) \int \beta_{(\gamma, \xi)} d\nu(\beta)
\]
\end{lemma}
\begin{proof}
We obtain
\begin{align*}
    \partial_{(\gamma, \xi)} \overline{Y}_k(0) &=  \partial_{x_{1,\xi_1}} \cdots \partial_{x_{M, \xi_M}}\overline{Y}_k(0) \\
    &= \int \partial_{x_{1,\xi_1}} \cdots \partial_{x_{M, \xi_M}} \overline{Y}_k(0,\beta) d \nu ( \beta) \\
    &= \int \partial_{\gamma} \partial_k V(0) \beta_{(\gamma, \xi)} d \nu ( \beta) \\
    &= \partial_{\gamma} \partial_k V(0) \int \beta_{(\gamma, \xi)} d\nu(\beta)
\end{align*}
where the interchange of integration and differentiation in the second equality follows from Assumption~\ref{a:reg}(i), the third equality is Lemma~\ref{l:main1}, and the final equality follows since the evaluation of $\partial_{\gamma}\partial_k V(0)$ is a constant that does not depend on $\beta$. 
\end{proof}

Combining the result of Lemma~\ref{l:a2} and Assumption~\ref{a:exist} ensures that there exists a set of goods and characteristic indices $(\gamma,\xi)$ such that $\partial_{(\gamma, \xi)} \overline{Y}_k(0) \neq 0$. To see this, recall that Assumption~\ref{a:exist} requires that for each collection of good indices $\gamma$ we can find characteristic indices $\xi$ such that $\int \beta_{(\gamma, \xi)} d\nu(\beta) \neq 0$. Given Assumption~\ref{a:reg}(iv) that $\delta_{\gamma} \delta_k V(0) \neq 0$, Lemma~\ref{l:a2} shows that we must have $\partial_{(\gamma, \xi)} \overline{Y}_k(0) \neq 0$.

\begin{lemma} \label{l:a3}
Fix $j,k \in \{1,\ldots, K\}$ and let $\gamma, \delta \in \{1, K \}^M$. Suppose that each good index shows up exactly the same number of times in $(\gamma,k)$ and $(\delta,j)$. Then
\[
\partial_{\gamma} \partial_k V(0) = \partial_{\delta} \partial_j V(0).
\]
\end{lemma}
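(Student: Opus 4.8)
The plan is to recognize this as a direct instance of the symmetry-of-mixed-partials result already established in Lemma~\ref{l:sym}. The key observation is a notational one: the operator $\partial_{\gamma} \partial_k$ is, by definition, the composition of the $M$ first-order operators indexed by the entries of $\gamma$ followed by differentiation in the $k$-th coordinate. Writing $(\gamma, k) \in \{1, \ldots, K\}^{M+1}$ for the concatenation of $\gamma$ with the single index $k$, this means $\partial_{\gamma} \partial_k V = \partial_{(\gamma, k)} V$, and likewise $\partial_{\delta} \partial_j V = \partial_{(\delta, j)} V$ with $(\delta, j) \in \{1, \ldots, K\}^{M+1}$. So the claim is simply that two $(M+1)$-fold partial derivatives of $V$ at $0$ agree.

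First I would verify the differentiability hypothesis required by Lemma~\ref{l:sym}. Assumption~\ref{a:reg}(iii) guarantees that $V$ is $(M+1)$-times continuously differentiable in a neighborhood of $0$, which is exactly the smoothness needed to apply Lemma~\ref{l:sym} with the order parameter taken to be $M+1$ and the evaluation point $\vec{u} = 0$.

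Next I would check that the index-multiplicity hypothesis matches. Lemma~\ref{l:sym} requires that each good index occur the same number of times in the two index tuples of common length. Here the tuples are $(\gamma, k)$ and $(\delta, j)$, both of length $M+1$, and the hypothesis of the lemma being proved states precisely that each good index shows up exactly the same number of times in $(\gamma, k)$ and $(\delta, j)$. This is verbatim the condition needed.

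With both hypotheses in place, applying Lemma~\ref{l:sym} at $\vec{u} = 0$ gives $\partial_{(\gamma, k)} V(0) = \partial_{(\delta, j)} V(0)$, which is the desired equality $\partial_{\gamma} \partial_k V(0) = \partial_{\delta} \partial_j V(0)$. I do not anticipate any substantive obstacle: the entire content is the bookkeeping of folding the trailing coordinate $k$ (respectively $j$) into the index tuple so that the hypotheses of Lemma~\ref{l:sym} align exactly, together with confirming that the order of differentiability supplied by Assumption~\ref{a:reg}(iii) is large enough to support an $(M+1)$-fold derivative.
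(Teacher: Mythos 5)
Your proposal is correct and is exactly the paper's argument: the paper proves Lemma~\ref{l:a3} by observing it is ``a slight restatement of Lemma~\ref{l:sym},'' and your bookkeeping of concatenating $k$ (resp.\ $j$) onto $\gamma$ (resp.\ $\delta$) to form length-$(M+1)$ tuples, then invoking Assumption~\ref{a:reg}(iii) for the required smoothness, simply makes that restatement explicit.
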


\begin{proof}
This is a slight restatement of Lemma~\ref{l:sym}.
\end{proof}
Suppose now that $(\delta, \eta)$ is defined similar to $(\gamma, \xi)$. That is, $\delta \in \{1, \ldots K \}^M$ denotes good indices and $\eta_j \in \{1, \ldots, d_{\delta_{j}}\}$ indexes a characteristic of the $\delta_j$-th good.

Combining the previous two lemmas, we obtain that if each good index shows up exactly the same number of times in $(\gamma,k)$ and $(\delta,j)$, then
\begin{equation} \label{eq:ratio}
\partial_{(\gamma, \xi)} \overline{Y}_k(0) \Bigg/ \partial_{(\delta, \eta)} \overline{Y}_j(0)  =  \int \beta_{(\gamma, \xi)} d\nu(\beta) \Bigg/ \int \beta_{(\delta, \eta)} d\nu(\beta)
\end{equation}
whenever the denominator is nonzero. Thus, if the denominator of moments is identified, the numerator is as well.

\begin{lemma} \label{l:one}
Suppose $\gamma, \delta \in \{1, \ldots, K \}^M$ only differ in at most one component and $\int \beta_{(\delta, \eta)} d\nu ( \beta)$ is identified and nonzero. Then for every $\xi$ tuple of characteristic indices $\int \beta_{(\gamma, \xi)} d\nu (\beta)$ is identified.
\end{lemma}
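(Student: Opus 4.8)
The plan is to reduce the claim to a single application of the ratio identity~(\ref{eq:ratio}), choosing the appended good indices so that the unknown moment $\int \beta_{(\gamma,\xi)} d\nu$ is tied to the known moment $\int \beta_{(\delta,\eta)} d\nu$. Recall that (\ref{eq:ratio}) expresses a ratio of identified derivatives of $\overline{Y}$ at $0$ as a ratio of moments, \emph{provided} the two length-$(M+1)$ tuples obtained by appending one good index to $\gamma$ and to $\delta$ agree as multisets; this is exactly the hypothesis of Lemma~\ref{l:a3}, itself a restatement of the symmetry Lemma~\ref{l:sym}. So the whole argument hinges on finding appended indices $k$ and $j$ that make $(\gamma,k)$ and $(\delta,j)$ multiset-equal.

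If $\gamma = \delta$, this is immediate: append the same index $k = j$, so that $(\gamma,k)$ and $(\delta,j)$ are literally equal, and (\ref{eq:ratio}) ties $\int \beta_{(\gamma,\xi)} d\nu$ to $\int \beta_{(\delta,\eta)} d\nu$ directly. Otherwise $\gamma$ and $\delta$ differ in exactly one coordinate, say $\gamma_i = a \ne b = \delta_i$ with $\gamma_p = \delta_p$ for all $p \ne i$. Writing the shared coordinates as a common sub-multiset $C$, the appended tuples have multisets $C + \{a\} + \{k\}$ and $C + \{b\} + \{j\}$, so multiset-equality reduces to $\{a,k\} = \{b,j\}$. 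The key observation is then that setting $k = \delta_i = b$ and $j = \gamma_i = a$ achieves this, since $\{a,b\} = \{b,a\}$. With this choice the hypothesis of Lemma~\ref{l:a3} holds, and (\ref{eq:ratio}) applies with these $k,j$.

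Given the correct indices, I would finish by solving (\ref{eq:ratio}) for the unknown moment,
\[
\int \beta_{(\gamma,\xi)} d\nu(\beta) = \frac{\partial_{(\gamma,\xi)} \overline{Y}_{k}(0)}{\partial_{(\delta,\eta)} \overline{Y}_{j}(0)} \int \beta_{(\delta,\eta)} d\nu(\beta).
\]
The right-hand side is identified: the numerator and denominator derivatives are known from the average structural function near $0$; the denominator is nonzero because by Lemma~\ref{l:a2} it equals $\partial_{\delta} \partial_j V(0) \int \beta_{(\delta,\eta)} d\nu$, with $\partial_{\delta} \partial_j V(0) \ne 0$ by Assumption~\ref{a:reg}(iv) and $\int \beta_{(\delta,\eta)} d\nu \ne 0$ by hypothesis; and the last factor is identified by hypothesis. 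Since this works for every characteristic tuple $\xi$ against the same fixed $(\delta,\eta)$, every moment $\int \beta_{(\gamma,\xi)} d\nu$ is identified.

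I expect the only real content to be the combinatorial step of choosing $k = \delta_i$ and $j = \gamma_i$. Everything else is bookkeeping, but this is precisely where the hypothesis that $\gamma$ and $\delta$ differ in \emph{at most one} coordinate is used: a single-coordinate discrepancy is exactly what can be repaired by appending one index to each side while keeping the augmented multisets equal, which is what lets the unknown $V$-derivative factors cancel under Lemma~\ref{l:a3}.
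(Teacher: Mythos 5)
Your proof is correct and takes essentially the same route as the paper's: combine Lemma~\ref{l:a2} with the symmetry Lemma~\ref{l:a3} to obtain the ratio identity (\ref{eq:ratio}) for suitably appended good indices, then solve for the unknown moment, using Assumption~\ref{a:reg}(iv) and the nonzero hypothesis to avoid division by zero. Your explicit choice $k=\delta_i$, $j=\gamma_i$ is exactly the combinatorial step the paper leaves implicit when it asserts that the two lemmas ``immediately imply'' (\ref{eq:ratio}).
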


\begin{proof}
Lemmas~\ref{l:a2} and~\ref{l:a3} immediately imply (\ref{eq:ratio}) since
\begin{align*} 
\partial_{(\gamma, \xi)} \overline{Y}_k(0) \Bigg/ \partial_{(\delta, \eta)} \overline{Y}_j(0)  &= \left( \partial_{\gamma} \partial_k V(0) \int \beta_{(\gamma, \xi)} d\nu(\beta) \right) \Bigg/ \left(\partial_{\delta} \partial_j V(0) \int \beta_{(\delta, \eta)} d\nu(\beta) \right)  \\
&= \int \beta_{(\gamma, \xi)} d\nu(\beta) \Bigg/ \int \beta_{(\delta, \eta)} d\nu(\beta).
\end{align*}
The term $\int \beta_{(\gamma, \xi)} d\nu (\beta)$ is identified because all other parts of (\ref{eq:ratio}) are identified.
\end{proof}

Note that in Lemma~\ref{l:one} that the $\gamma$ and $\delta$ terms can be the same. This covers the non-trivial $K = 1$ case when there are multiple characteristics for the first good.

Theorem~\ref{thm:main} requires that $\int \beta^{M}_{1,1} \nu(d \beta)$ be known. We present a lemma that drops this assumption for the moment. This lemma will be used in subsequent results.
\begin{lemma} \label{l:a5}
If $\int \beta^M_{1,1} d\nu(\beta)$ is not known, then we still identify the ratio of any $M$-th order moments
\[
\int \beta_{(\gamma, \xi)} d\nu(\beta) \Bigg/ \int \beta_{(\delta, \eta)} d\nu(\beta),
\]
provided the denominator is nonzero.
\end{lemma}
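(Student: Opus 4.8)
The plan is to reproduce the chaining argument already developed in Lemmas~\ref{l:a2}--\ref{l:one}, but to anchor everything to the fixed reference moment $\int \beta_{(\delta,\eta)} d\nu(\beta)$ (nonzero by hypothesis) rather than to the externally-specified scale $\int \beta^M_{1,1} d\nu(\beta)$. The crucial point is that the identity (\ref{eq:ratio}), obtained by combining Lemma~\ref{l:a2} with the symmetry of mixed partials of $V$ (Lemma~\ref{l:a3}), never invokes Assumption~\ref{a:scale}: it writes the ratio of two $M$-th order moments purely in terms of identified derivatives of $\overline{Y}$ at $0$, provided the two good-index tuples differ in at most one component and the denominator moment is nonzero. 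Thus every single-swap ratio is already identified without knowing $\int \beta^M_{1,1} d\nu(\beta)$, and Assumption~\ref{a:scale} is seen to play no role except in fixing an absolute scale.

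First I would take $\int \beta_{(\delta,\eta)} d\nu(\beta)$ as the anchor and interpret ``identified'' throughout as ``identified as a ratio to this anchor,'' so that the anchor is trivially identified (its ratio to itself is $1$) and nonzero. Next I would connect the good-index tuples $\delta$ and $\gamma$: since both lie in $\{1,\ldots,K\}^M$, one can pass from $\delta$ to $\gamma$ through a finite sequence $\delta = \gamma^{(0)}, \gamma^{(1)}, \ldots, \gamma^{(T)} = \gamma$ in which consecutive tuples differ in exactly one component, simply by changing the mismatched coordinates one at a time. At each intermediate tuple $\gamma^{(t)}$, Assumption~\ref{a:exist} furnishes characteristic indices $\xi^{(t)}$ with $\int \beta_{(\gamma^{(t)},\xi^{(t)})} d\nu(\beta)\neq 0$. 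Starting from the anchor, I would apply Lemma~\ref{l:one} step by step: each application identifies (relative to the anchor) the nonzero moment at the next tuple, which then serves as a legitimate new reference for the following step. Iterating along the entire chain identifies the specific target moment $\int \beta_{(\gamma,\xi)} d\nu(\beta)$ relative to $\int \beta_{(\delta,\eta)} d\nu(\beta)$, which is exactly the claimed ratio.

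The hard part will be guaranteeing that the chain can be run entirely through moments that are simultaneously nonzero (so the divisions implicit in (\ref{eq:ratio}) are legitimate) and identifiable at each stage. This is precisely what Assumption~\ref{a:exist} supplies: a nonzero moment exists at \emph{every} good-index tuple, so there is always a stepping stone to propagate identification forward, and the connectivity of $\{1,\ldots,K\}^M$ under single-component changes ensures the target tuple is reachable from the anchor. The resulting argument is a verbatim shadow of the proof of Theorem~\ref{thm:main}, the sole difference being that the scale of the anchor is left unspecified, so only ratios of $M$-th order moments—rather than their absolute values—are pinned down.
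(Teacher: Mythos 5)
Your proposal is correct and follows essentially the same route as the paper's proof: both chain through the good-index tuples one component at a time, using equation~(\ref{eq:ratio}) (via Lemma~\ref{l:one}) at each swap, with Assumption~\ref{a:exist} supplying nonzero stepping-stone moments, and then multiply the adjacent ratios. The only cosmetic difference is that you anchor the chain directly at the given denominator moment $\int \beta_{(\delta,\eta)} d\nu(\beta)$, whereas the paper anchors at $\delta^0=(1,\ldots,1)$ and relates arbitrary pairs by dividing two anchored ratios; the substance is identical.
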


\begin{proof}
Start with good indices $\delta^0 = (1, \ldots, 1)$ of length $M$, and characteristic indices $\eta^0$ such that the corresponding moment of $\beta$ is nonzero. Applying Lemma~\ref{l:one} for the pair with goods $\delta^1 = (2, 1, \ldots, 1)$ and characteristic indices $\eta^1$, we identify the ratio
\[
\int \beta_{(\delta^1, \eta^1)} d \nu(\beta) \Bigg/ \int \beta_{(\delta^0, \eta^0)} d\nu(\beta).
\]
We can repeat this procedure with a sequence $(\delta^1,\eta^1)$ and $\delta^2 = (1, 2, \ldots, 1)$ with appropriately chosen characteristic indices $\eta^2$, and so forth, to construct a sequence $\delta^0, \delta^1, \ldots$ that reaches all possible tuples of good indices $\gamma \in \{1, \ldots, M \}^K$. At each step, we can change the good index one component at a time and then apply (\ref{eq:ratio}). This identifies the ratio of two adjacent moments in this sequence. We avoid dividing by zero because of the relevance condition (Assumption~\ref{a:exist}), which implies for each set of goods, $\delta$, we can find tuples of characteristics, $\eta$, such that $\int \beta_{(\delta,\eta)} d\nu(\beta)$ is nonzero.

By multiplication we can identify new ratios. For example, a ratio involving $\delta^2$ and $\delta^0$ is identified via
\begin{align*}
\int \beta_{(\delta^2, \eta^2)} &  d\nu(\beta) \Bigg/ \int \beta_{(\delta^0, \eta^0)} d\nu(\beta) = \\
& \left( \int \beta_{(\delta^2, \eta^2)} d\nu(\beta) \Bigg/ \int \beta_{(\delta^1, \eta^1)} d\nu(\beta)  \right) \left( \int \beta_{(\delta^1, \eta^1)} d\nu(\beta) \Bigg/ \int \beta_{(\delta^0, \eta^0)} d\nu(\beta) \right).
\end{align*}

From these arguments, for each pair of good indices $\gamma$ and $\delta$ we can find some tuples of characteristic indices $\xi$ and $\eta$ such that
\[
\int \beta_{(\gamma, \xi)} d\nu(\beta) \Bigg/ \int \beta_{(\delta, \eta)} d\nu(\beta)
\]
is identified, where numerator and denominator are nonzero.

From Lemma~\ref{l:one}, the ratio
\[
\int \beta_{(\delta, \tilde{\eta})} d\nu(\beta) \Bigg/ \int \beta_{(\delta, \eta)} d\nu(\beta)
\]
is identified for any vector of characteristic indices $\tilde{\eta}$ where $\eta$ is chosen so that the denominator is nonzero. Thus, we identify the ratio of all moments.
\end{proof}
Using Lemma~\ref{l:a5}, we conclude that if we fix $\int \beta^M_{1,1} d\nu (\beta)$ in advance and it is nonzero, we identify all moments. However, we could fix any other nonzero $M$-th order moment and also obtain identification.

Finally, from Lemma~\ref{l:a2} we have for all $\gamma \in \{1, \ldots, K \}^M$ that
\[
\partial_{(\gamma, \xi)} \overline{Y}_k(0) = \partial_{\gamma} \partial_k V(0) \int \beta_{(\gamma, \xi)} d\nu(\beta).
\]
Moreover, from Assumption~\ref{a:exist} we can find some $\xi$ such that the right hand side is nonzero. By dividing, we identify $\partial_{\gamma} \partial_k V(0)$, completing the proof of Theorem~\ref{thm:main}.

\subsection{Proof of Proposition~\ref{prop:alt}} \label{app:propalt}

First, from the envelope theorem (see Lemma~\ref{l:a2} above),
\[
\frac{\partial {\overline{Y}_1(0)}}{\partial x_{1, 1}} = \partial_{1,1} V(0) \int \beta_{1,1} \nu(d \beta).
\]
The function $V$ is convex and hence $\partial_{1,1} V(0) > 0$ whenever this derivative is nonzero, so $\frac{\partial {\overline{Y}_1(0)}}{\partial x_{1, 1}}$ and $\int \beta_{1,1} d\nu(\beta)$ have the same sign. Thus, the sign of the first moment of $\beta_{1,1}$ is identified from above and the magnitude is assumed known in Assumption~\ref{a:oneind}.

We prove the remainder of the result by induction on $M$. Recall that with $M = 1$, first order moments are identified from Theorem~\ref{thm:main} using the assumption that $\int \beta_{1,1} \nu (d \beta)$ is known and nonzero.

Now, fix an $M$ such that $1 \le M \leq \overline{M}-1$. As the inductive hypothesis, we assume all $M$-th order moments $\int \beta_{(\delta, \eta)} d\nu (\beta)$ are identified for all $\delta \in \{1,\ldots,K\}^M$ and $\eta$ collections of characteristic indices. We show all $M+1$ order moments are also identified. 

By Assumption~\ref{a:exist}, when $K \geq 2$, for $\delta \in \{ 2, \ldots, K \}^M$ (i.e. no good index is equal to $1$) we can find a collection of characteristic indices $\eta$ such that $\int \beta_{(\delta, \eta)} d\nu (\beta) \neq 0$. If instead $K = 1$, we can set $\delta$ as the length-$M$ vector of $1$'s and let $\eta$ be some collection of characteristic indices with $\eta_{m} \neq 1$ such that
\[
\int \beta_{1,\eta_{1}} \cdots \beta_{1, \eta_{M}} d\nu(\beta) \neq 0.
\]
In either case $K = 1$ or $K \geq 2$, set $\tilde{\delta} = (\delta', 1)'$ and $\tilde{\eta} = (\eta',1)'$. Then we obtain
\[
\int \beta_{\left(\tilde{\delta}, \tilde{\eta} \right)} d\nu(\beta) =  \int \beta_{1,1} d\nu(\beta) \int \beta_{(\delta, \eta)} d\nu(\beta)
\]
because $\beta_{1,1}$ is independent of all other components of $\beta$ under the measure $\nu$, and the tuple $(\delta,\eta)$ does not include the first characteristic of good $1$. In particular, we identify
\[
\int \beta_{(\tilde{\delta}, \tilde{\eta})} d\nu(\beta),
\]
which is nonzero because it is the product of two nonzero terms. From Lemma~\ref{l:a5}, we identify the ratio of all $M+1$ order moments to $\int \beta_{(\tilde{\delta}, \tilde{\eta})} d\nu(\beta)$. Since $\int\beta_{(\tilde{\delta}, \tilde{\eta})} d\nu(\beta)$ is known and nonzero we identify all $M + 1$ order moments from Theorem~\ref{thm:main}. This establishes identification of the moments. To identify derivatives of $V$, use Lemma~\ref{l:a2} as in the proof of Theorem~\ref{thm:main}.

\subsection{Proof of Proposition~\ref{prop:idhomog}}
Let $x = (x_1', \ldots, x_K')'$ satisfy $x_{k,1} \in [\underline{x}_{k,1}, \overline{x}_{k,1}]$ for each $k$, and $x_{k,j} = 0$ for $j > 1$. Let $x_{:,1}=(x_{1,1},\ldots,x_{K,1})$ be a vector of the first characteristics for each good. From Lemma~\ref{l:env} and integrating over $\beta$, we obtain
\[ \overline{Y}(x) = \nabla V(x_{:,1})\]
where $V(x)$ is convex. 

Consider initial characteristic values, $x^I$, and final characteristic values, $x^F$, such that for all $k\in \{1,\ldots, K\}$ and for all $j>1$, the equality $x_{k,j}^I=x_{k,j}^F=0$ holds. By integrating from $x^I$ to $x^F$, we obtain
\begin{align*} V(x_{:,1}^F)-V(x_{:,1}^I)&=
\int_{0}^1 \overline{Y}(tx^F-(1-t)x^I) \cdot (x_{:,1}^F-x_{:,1}^I) dt,
\end{align*}
where Riemann integrability follows from \cite{rockafellar1970convex} Corollary 24.2.1.

\section{Supplemental Results}

\subsection{Plug-in Estimation} \label{app:estimation}
The proof of Theorem~\ref{thm:main} is based on multiplying derivative ratios. By directly pluging-in an estimator of the $M$-th order derivatives, one can construct an estimator of the $M$-th order moments. Suppose we have estimator
\[
\partial_{x_{k_1,\ell_1}} \cdots \partial_{x_{k_M,\ell_M}} \hat{\overline{Y}}_1(0)
\]
of the associated $M$-th order derivative of the average structural function where $k_m=1$ for all but one term which has $k_{\tilde{m}}=j$. In addition, suppose we have an estimator
\[
\partial_{x_{1,1}} \cdots \partial_{x_{1,1}} \hat{\overline{Y}}_j(0)
\]
of the $M$-th order partial derivative of the structural function with respect to the $x_{1,1}$ regressor.

We construct an estimator
\[
\widehat{\int \beta_{k_1,\ell_1} \cdots \beta_{k_M,\ell_M} \nu (d \beta)} = \frac{\partial_{x_{k_1,\ell_1}} \cdots \partial_{x_{k_M,\ell_M}} \hat{\overline{Y}}_1(0)}{\partial_{x_{1,1}} \cdots \partial _{x_{1,1}} \hat{\overline{Y}}_j(0)},
\]
where for simplicitly we assume $\int \beta^M_{1,1} \nu (d \beta) = 1$. (More generally we need it to be known \textit{a priori} and nonzero.)

Using Equation~\ref{eq:ratio}, we see that 
\begin{align*}
\int \beta_{k_1,\ell_1} \cdots & \beta_{k_M,\ell_M} \nu (d \beta) - \widehat{\int \beta_{k_1,\ell_1} \cdots \beta_{k_M,\ell_M} \nu (d \beta)} = \\
& \frac{\partial_{x_{k_1,\ell_1}} \cdots \partial_{x_{k_M,\ell_M}} \overline{Y}_1(0)}{\partial_{x_{1,1}} \cdots \partial _{x_{1,1}} \overline{Y}_j(0)} - \frac{\partial_{x_{k_1,\ell_1}} \cdots \partial_{x_{k_M,\ell_M}} \hat{\overline{Y}}_1(0)}{\partial_{x_{1,1}} \cdots \partial _{x_{1,1}} \hat{\overline{Y}}_j(0)}.
\end{align*}
Thus, estimation error on the right-hand side translates to estimation error on the left hand side for the $M$-th order moment of $\beta$. Note that the choice of the $j$-th good is arbitrary here, and so one could also construct an estimator with right hand side replaced by an average over ratios with respect to different goods.

This argument can be generalized to additional moments. Here, we use the fact that we are interested in an $M$-th order moment that is only one good index away from being a vector of $1$'s. For other $M$-th order moments, our constructive formulas show one must multiply additional derivative ratios. See in particular the proof of Lemma~\ref{l:a5}.

\subsection{$V$ Known and Relation to \cite{fox2012random}} \label{s:vknown}
Theorem~\ref{thm:main} identifies the $M$-th order moment of $\beta$ when we fix $\int \beta^M_{1,1} \nu (d \beta)$. By fixing the entire distribution of $\beta_{1,1}$, we can identify all moments of $\beta$. An alternative assumption is that $V$ is known. If we impose this assumption, then we can drop Assumption~\ref{a:scale}, which provides knowledge of each $M$-th order moment of $\beta_{1,1}$, and Assumption~\ref{a:exist} that a rich collection of moments are nonzero. The intuition why we can relax the scale assumption on moments is that here we instead set the scale of $V$.
\begin{assm} \label{a:vknown}
$V$ is known in a neighborhood of $0$, up to an additive constant.
\end{assm}
\begin{prop} \label{p:vknown}
Let Assumptions~\ref{a:factor}-\ref{a:cont},~\ref{a:reg}, and~\ref{a:vknown} hold with the same natural number $M$. Each $M$-th order moment
\[
\int \beta_{k_1,\ell_1} \cdots \beta_{k_M, \ell_M} d\nu ( \beta)
\]
is identified.
\end{prop}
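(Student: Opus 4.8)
The plan is to exploit Lemma~\ref{l:a2} directly, observing that once $V$ is known its right-hand side becomes the product of a \emph{known} constant and the unknown moment. Recall that Lemma~\ref{l:a2} gives, for any good index $k$ and any tuple $(\gamma,\xi)$ with $\gamma\in\{1,\ldots,K\}^M$,
\[
\partial_{(\gamma, \xi)} \overline{Y}_k(0) = \partial_{\gamma} \partial_k V(0) \int \beta_{(\gamma, \xi)} d\nu(\beta).
\]
In the proof of Theorem~\ref{thm:main} the factor $\partial_{\gamma}\partial_k V(0)$ was itself unknown, which forced the chain of symmetry and ratio arguments together with the scale normalization (Assumption~\ref{a:scale}) and the relevance condition (Assumption~\ref{a:exist}). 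Here Assumption~\ref{a:vknown} supplies this factor outright, so those two assumptions are no longer needed and each moment is read off in a single step.

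Concretely, to identify $\int \beta_{k_1,\ell_1}\cdots\beta_{k_M,\ell_M}\,d\nu(\beta)$, I would set $\gamma=(k_1,\ldots,k_M)$ and $\xi=(\ell_1,\ldots,\ell_M)$, fix an arbitrary good $k\in\{1,\ldots,K\}$, and apply the displayed identity. The left-hand derivative $\partial_{(\gamma,\xi)}\overline{Y}_k(0)$ is identified from the data: $\overline{Y}$ is known near $0$ by Assumption~\ref{a:reg}(v), so its $M$-th order partials at $0$ are computable, and the interchange of differentiation and integration underlying Lemma~\ref{l:a2} is guaranteed by Assumption~\ref{a:reg}(i). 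The factor $\partial_{\gamma}\partial_k V(0)$ is an $(M+1)$-th order derivative of $V$, which exists and is continuous by Assumption~\ref{a:reg}(iii) and is known by Assumption~\ref{a:vknown} (the additive-constant ambiguity is killed as soon as we differentiate at least once). Dividing,
\[
\int \beta_{(\gamma, \xi)} d\nu(\beta) = \frac{\partial_{(\gamma, \xi)} \overline{Y}_k(0)}{\partial_{\gamma} \partial_k V(0)},
\]
identifies the moment.

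The only thing to verify is that the denominator does not vanish, and this is immediate: the multi-index $(\gamma,k)=(k_1,\ldots,k_M,k)$ lies in $\{1,\ldots,K\}^{M+1}$, so Assumption~\ref{a:reg}(iv) yields $\partial_{\gamma}\partial_k V(0)\neq 0$. This is the main, and essentially the only, obstacle, and it is dispatched by an assumption already in force. Two features distinguish this argument from that of Theorem~\ref{thm:main} and explain why the dropped assumptions are unnecessary. First, because the derivative factor is known rather than inferred, there is no scale indeterminacy, so Assumption~\ref{a:scale} can be discarded. Second, because each moment is recovered in isolation rather than through a chain of ratios, we never divide by an unknown moment; hence the relevance condition (Assumption~\ref{a:exist}) ensuring nonzero moments is not needed, and in particular moments that happen to equal zero are identified as well (the formula simply returns zero). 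Since the good index $k$ may be chosen freely, distinct goods yield overidentifying restrictions.
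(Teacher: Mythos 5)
Your proof is correct and follows essentially the same route as the paper's: apply Lemma~\ref{l:a2} at $x=0$, note that $\partial_{\gamma}\partial_k V(0)$ is known (Assumption~\ref{a:vknown}) and nonzero (Assumption~\ref{a:reg}(iv)), and divide to read off each moment directly. Your added remarks --- that differentiation removes the additive-constant ambiguity, that zero moments are also identified, and that different choices of $k$ give overidentifying restrictions --- are accurate refinements of the same argument.
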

\begin{proof}
Lemma~\ref{l:a2} holds under the assumptions of this proposition, and so for each tuple of good indices $\gamma \in \{1, \ldots, M\}^M$ and characteristic indices $\xi$ we have
\[
\partial_{(\gamma, \xi)} \overline{Y}_k(0) = \partial_{\gamma} \partial_k V(0) \int \beta_{(\gamma, \xi)} d\nu( \beta).
\]
Since $\partial_{\gamma} \partial_k V(0) \neq 0$ by assumption, we identify $\int \beta_{(\gamma, \xi)} \nu(d \beta)$ by
\[
\partial_{(\gamma, \xi)} \overline{Y}_k(0) \Bigg/ \partial_{\gamma} \partial_k V(0) = \int \beta_{(\gamma, \xi)} d\nu(\beta).
\]
\end{proof}
The proof demonstrates that in fact, we only need $\partial_{\gamma} \partial_k V(0)$ to be known and nonzero for \textit{some} $k$ in order to identify a corresponding moment of $\beta$. For further relation to Theorem~\ref{thm:main}, consider good indices $\gamma = (1, \ldots, 1)$ and characteristic indices $\xi = (1, \ldots, 1)$. Then as in the proof of Proposition~\ref{p:vknown}, we obtain
\[
\partial^M_{1} \overline{Y}_k(0) = \partial^M_{1} \partial_k V(0) \int \beta^M_{1,1} d\nu(\beta).
\]
This shows that one can \textit{either} fix the $M$-th moment of $\beta_1$ or fix $\partial^M_{1} \partial_k V(0)$ for some $k$, and then the other can be identified. Thus, Assumption~\ref{a:scale} can be replaced in Theorem~\ref{thm:main} if we instead assume $\partial^M_{1} \partial_k V(0)$ is known for some $k$. Alternatively, if we assume $\beta_{1,1}$ is independent of $\beta$ as in Section~\ref{sec:alt}, then specifying $\partial_{1,1} V(0)$ identifies $\int \beta_{1,1} d\nu (\beta)$ by the envelope theorem
\[
\frac{\partial {\overline{Y}_1(0)}}{\partial x_{1, 1}} = \partial_{1,1} V(0) \int \beta_{1,1} d\nu(\beta).
\]
Thus, independence combined with a single scale assumption on a partial derivative of $V$ can identify all moments of $\beta$ by adapting Proposition~\ref{prop:alt}.

In discrete choice, \cite{fox2012random} present a constructive approach to identifying moments of the distribution of random coefficients. Specializing our analysis to discrete choice, their assumptions show that when the distribution of an additive error is known (e.g. logit with known intercept) that this implies identification of $V$. To see this, recall $V$ is defined as an indirect utility function given a disturbance $\overline{D}$ and constraint set $\overline{B}$. In turn, using Lemma~\ref{l:agg} we see $\overline{D}$ and $\overline{B}$ are determined by the budget set $B$, disturbance function $D$, and measure $\mu$ over $\eps$. Thus, when the budget set and $\mu$ are known then one can find $\overline{B}$ and $\overline{D}$ needed to compute $V$. For example,  multinomial logit is described by 
\[
V(\vec{u}) = \max_{y \in \overline{B}} \sum_{k = 1}^K y_k u_k + \sum_{k = 1}^K(\alpha_k + p_k \ln p_k),
\]
where $\alpha_k$ is a nonrandom intercept for good $k$ and $\overline{B}$ is the probability simplex (e.g. \cite{anderson1992discrete}). The derivatives of $V$ can be used to yield the standard logit formula
\[
\overline{Y}_k(\vec{u}) = \frac{e^{\alpha_k + u_k}}{\sum_{j = 1}^K e^{\alpha_j + u_j}}.
\]

We conclude that Proposition~\ref{p:vknown} is a generalization of a technique of \cite{fox2012random} to settings outside of discrete choice.\footnote{\cite{fox2012random} also presents nonconstructive results using alternative assumptions maintaining the assumption that $V$ is known.} However, Theorem~\ref{thm:main} does \textit{not} require that $V$ be known. Thus, the results in this paper complement their approach since while it relaxes assumptions on $V$, it  instead requires an additional scale assumption (Assumption~\ref{a:scale}) and requires that a rich collection of moments are nonzero (Assumption~\ref{a:exist}).

\subsection{Homogeneity of Coefficients and Relation to \cite{chernozhukov2019nonseparable}} \label{supp:cherno}
When $M = 2$, the proof of Theorem~\ref{thm:main} establishes the constructive formula
\begin{align} \label{eq:ratioid}
\frac{\partial {\overline{Y}_k(0)}}{\partial x_{j,\ell}} \Bigg/ \frac{\partial {\overline{Y}_{j}(0)}}{\partial x_{k,m}}  = \int \beta_{j,\ell} d\nu( \beta) \Bigg/ \int \beta_{k,m} d\nu( \beta).
\end{align}
A version of (\ref{eq:ratioid}) has appeared for binary choice in \cite{chernozhukov2019nonseparable}, who also discuss identification of the ratios of $M$-th order moments of $\beta$ up to scale. They also mention one can identify certain moments up to scale in multinomial choice. Here is one interpretation of their discussion, translated to our setup. Start with a second order derivative like
\[
\partial_{x_{k_1,\ell_1}} \partial_{x_{k_2,\ell_2}} \overline{Y}_{k_3}(0) = \partial_{k_1} \partial_{k_2} \partial_{k_3} V(0)  \int \beta_{k_1, \ell_1} \beta_{k_2, \ell_2} d\nu (\beta).
\]
Now keep the good indices ($k_1$ and $k_2$) constant, but change the characteristics to get
\[
\partial_{x_{k_1,\tilde{\ell}_1}} \partial_{x_{k_2,\tilde{\ell}_2}} \overline{Y}_{k_3}(0) = \partial_{k_1} \partial_{k_2} \partial_{k_3} V(0)  \int \beta_{k_1, \tilde{\ell}_1} \beta_{k_2, \tilde{\ell}_2} d\nu (\beta).
\]
Since the derivatives of $V$ are taken with respect to the same arguments, we can divide these equations to identify the associated ratios of moments of $\beta$. This technique resembles an implicit function theorem argument for identification. Importantly, this technique only covers ratios of moments in which the good indices ($k_1$ and $k_2$ here) are the same, because it does not use symmetry (cf. Lemma~\ref{l:sym}). Using symmetry, this paper establishes identification of the ratio of \textit{all} $M$-th order moments, not only those that have the same good indices. However, if we impose additional assumptions such as $\beta_j = \beta_k$ for all goods, then the choice of good indices does not matter. In this special case, using the equations described previously one can identify the ratio of any $2$-nd order moments of $\beta$. Similar arguments can identify the ratio of any $M$-th order moments of $\beta$.

\end{appendices}

\end{document}